\title{Reachability Games in Dynamic Epistemic Logic}
 \author{Bastien Maubert}{Universit\`a degli Studi di Napoli
   ``Federico II'', Italy}{}{}{}
 \author{Sophie Pinchinat}{Universit\'e de Rennes, France}{}{}{}
 \author{Fran\c{}cois Schwarzentruber}{ENS Rennes, France}{}{}{}
\authorrunning{B. Maubert, S. Pinchinat and F. Schwarzentruber}
\keywords{Dynamic Epistemic Logic, game theory, imperfect information,
planning}
\newif\ifdraft\drafttrue
\newcommand\islongversion[1]{\ifthenelse{\boolean{longversion}}{#1}{}}
\newcommand\isnotlongversion[1]{\ifthenelse{\boolean{longversion}}{}{#1}}
 \newenvironment{proofsketch}{\trivlist\item[\hskip \labelsep {\bf
 Proof sketch.}\enskip]}%
 {\unskip\nobreak\hskip 2em plus 1fil\nobreak%
 \fbox{\rule{0ex}{0.3ex}\hspace{0.3ex}\rule{0ex}{0.3ex}}%
 \parfillskip=0pt \endtrivlist}
\DeclareMathAlphabet{\mathpzc}{OT1}{pzc}{m}{it} 
\newcommand{\egdef}{:=}
\newcommand{\agent}{{a}}
\newcommand{\agenta}{\agent}
\newcommand{\agentb}{{b}}
\newcommand{\atom}{p}
\newcommand{\atmset}{\ensuremath{\mathit{AP}}\xspace}
\newcommand{\atmsetf}{\ensuremath{\mathit{AP}_u}\xspace}
\newcommand{\AP}{\atmset}
\newcommand{\lang}{\mathcal{L}}
\newcommand{\languagePropositional}{\lang_{\text{Prop}}}
\newcommand{\languageEL}{\ensuremath{\lang_{\mathbf{EL}}}}
\newcommand{\kripkemodel}{\mathcal M}
\newcommand{\aworld}{w}
\newcommand{\world}{\aworld}
\newcommand{\worldb}{u}
\newcommand{\setworlds}{W}
\newcommand{\epistemicrelation}[1]{R_{#1}}
\newcommand{\epistemicrelationequiv}[1][\agent]{\sim_{#1}}
\newcommand{\relworlds}{\textit{R}}
\newcommand{\relworldsa}[1][\agent]{\relworlds_{#1}}
\newcommand{\valuationfunction}{V}
\newcommand{\eventmodel}{\mathcal A}
\newcommand{\eventmodeltuple}{(\setevents, (\epistemicrelationevents{\agent})_{\agent \in \agtset}, \pre, \post)}
\newcommand{\pre}{\text{pre}}
\newcommand{\post}{\text{post}}
\newcommand{\setevents}{A}
\newcommand{\epistemicrelationevents}[1]{{R^\eventmodel_{#1}}}
\newcommand{\epistemicrelationeventsequiv}[1][\agent]{\sim^{\eventmodel}_{#1}}
\newcommand{\eventlettreminuscule}{\alpha}
\newcommand{\event}{\eventlettreminuscule}
\newcommand{\eventb}{\eventlettreminuscule'}
\newcommand{\assignment}{{:=}}
\definecolor{TODO_COLOR}{rgb}{1,0.5,0.5}
\newcommand{\union}{\cup}
\newcommand{\set}[1]{{\{#1\}}}
\newcommand{\vide}{\emptyset}
\newcommand{\suchthat}{\mid}
\newcommand{\suth}{s.t.\ }
\newcommand{\bigand}{\bigwedge}
\newcommand{\lbigand}{\bigand}
\newcommand{\bigor}{\bigvee}
\newcommand{\bottom}{\bot}
\renewcommand{\phi}{\varphi}
\newcommand{\tuple}[1]{\langle #1 \rangle}
\newcommand{\lknow}[1]{K_{#1}}
\newcommand{\lknowpos}[1]{\hat K_{#1}}
\newcommand{\formula}{\phi}
\newcommand{\agtset}{\ensuremath{\textit{Agt}}\xspace}
\newcommand{\agtsetexists}{\ensuremath{\textit{Agt}}_\exists\xspace}
\newcommand{\agtsetforall}{\ensuremath{\textit{Agt}}_\forall\xspace}
\newcommand{\grammaris}{\hspace{2mm}::=\hspace{2mm}}
\newcommand{\grammarseparation}{\hspace{2mm}\mid\hspace{2mm}}
\newcommand{\grammarsep}{\grammarseparation}
\newcommand{\algofunction}{\textbf{function }}
\newcommand{\algoprocedure}{\textbf{procedure }}
\newcommand{\algoendfunction}{\textbf{endFunction }}
\newcommand{\algofor}{\textbf{for }}
\newcommand{\algodo}{\textbf{do }}
\definecolor{algocommentbackgroundcolor}{rgb}{1,1,0.5}
\newcommand{\algowhile}{\textbf{while }}
\newcommand{\algoif}{\textbf{if }}
\newcommand{\algothen}{\textbf{then }}
\newcommand{\algoelse}{\textbf{else }}
\newcommand{\algomatch}{\textbf{match }}
\newcommand{\algocase}{\textbf{case }}
\newcommand{\algoaccept}{\textbf{accept }}
\newcommand{\algoreject}{\textbf{reject }}
\newlength{\algoindentlongueur}
\newcommand{\algoindent}{\hspace*{\algoindentlongueur}}
\newlength{\algoindentavantvrulelongueur}
\newcommand{\algoindentavantvrule}{\hspace*{\algoindentavantvrulelongueur}}
\newlength{\dummy}
\newsavebox{\frameminipageboiteavecunnomsuperlongdesortequonnepuissepaslereutiliser}
\newenvironment{frameminipage}[2][c]{%
\begin{lrbox}{\frameminipageboiteavecunnomsuperlongdesortequonnepuissepaslereutiliser}%
\begin{minipage}[#1]{#2}%
} {%
\end{minipage}%
\end{lrbox}%
\framebox{\usebox{\frameminipageboiteavecunnomsuperlongdesortequonnepuissepaslereutiliser}}%
}
\newenvironment{algo} {
  \begin{frameminipage}{\linewidth}
} {
  \end{frameminipage}
}
\newenvironment{algobloc}{\setlength{\dummy}{\linewidth}\addtolength{\dummy}{- \algoindentlongueur}\addtolength{\dummy}{- \algoindentavantvrulelongueur}\algoindentavantvrule\vrule\algoindent\begin{minipage}{\dummy}}{\end{minipage}}
\newenvironment{algoblocprocedure}[1]
{\algoprocedure #1 \\  \begin{algobloc}}
	{\end{algobloc} }
\newenvironment{algoblocfor}[1]
{\algofor #1 \algodo \\  \begin{algobloc}}
{\end{algobloc} }
\newenvironment{algoblocwhile}[1]
{\algowhile #1 \algodo \\  \begin{algobloc}}
	{\end{algobloc} }
\newcommand{\indexemph}[1]{\emph{#1}\index{#1}}
\newcommand{\valuationelement}{v}
\tikzstyle{world}=[inner sep=0.5mm]
\tikzstyle{event}=[fill=gray!30, inner sep=0.5mm]
\tikzstyle{realworldarrowfromleft} = [initial left, initial text={}]
\tikzstyle{realworldarrowfromtop} = [initial above, initial text={}]
\tikzstyle{realworldarrowfromright} = [initial right, initial text={}]
\tikzstyle{realworldarrowfrombottom} = [initial below, initial text={}]
\newcommand{\eventinfigure}[2]{\scriptsize\hspace{-1mm}\ensuremath{\begin{array}{l}pre: #1 \\ post: #2\end{array}}\hspace{-2mm}}
\newcommand{\postconditiontrivial}{\raisebox{0.5mm}{\text{\tiny /}}}
\newcommand{\postconditionpfalse}{\raisebox{0.5mm}{\text{\tiny p $\leftarrow$ $\bottom$}}}
\protected\def\DTIME{\ifmmode \mbox{\sc Dtime} \else {\sc Dtime}\xspace\fi}
\protected\def\NTIME{\ifmmode \mbox{\sc Ntime} \else {\sc Ntime}\xspace\fi}
\protected\def\DSPACE{\ifmmode \mbox{\sc Dspace} \else {\sc Dspace}\xspace\fi}
\protected\def\NSPACE{\ifmmode \mbox{\sc Nspace} \else {\sc Nspace}\xspace\fi}
\protected\def\NP{\ifmmode \mbox{\sc NP} \else {\sc NP}\xspace\fi}
\protected\def\classAP{\ifmmode \mbox{\sc AP} \else {\sc AP}\xspace\fi}
\protected\def\coNP{\ifmmode \mbox{\sc coNP} \else {\sc coNP}\xspace\fi}
\protected\def\NPSPACE{\ifmmode \mbox{\sc NPspace} \else {\sc NPspace}\xspace\fi}
\protected\def\PSPACE{\ifmmode \mbox{\sc Pspace} \else {\sc Pspace}\xspace\fi}
\protected\def\APSPACE{\ifmmode \mbox{\sc APspace} \else {\sc APspace}\xspace\fi}
\protected\def\EXPSPACE{\ifmmode \mbox{\sc Expspace} \else {\sc Expspace}\xspace\fi}
\protected\def\TWOEXPSPACE{\ifmmode \mbox{\sc 2Expspace} \else {\sc 2Expspace}\xspace\fi}
\protected\def\PTIME{\ifmmode \mbox{\sc P} \else {\sc P}\xspace\fi}
\protected\def\NPTIME{\ifmmode \mbox{\sc NP} \else {\sc NP}\xspace\fi}
\protected\def\EXPTIME{\ifmmode \mbox{\sc Exptime} \else {\sc
    Exptime}\xspace\fi}
\protected\def\AEXPTIME{\ifmmode \mbox{\sc Aexptime} \else {\sc Aexptime}\xspace\fi}
\protected\def\NEXPTIME{\ifmmode \mbox{\sc NExptime} \else {\sc NExptime}\xspace\fi}
\protected\def\2EXPTIME{\ifmmode \mbox{\sc 2-Exptime} \else {\sc
		2-Exptime}\xspace\fi}
\DeclareRobustCommand{\kEXPTIME}[1][k]{\ifmmode \mbox{\sc $#1$-Exptime}
	\else {\sc $#1$-Exptime}\xspace\fi}
\DeclareRobustCommand{\kNEXPTIME}[1][k]{\ifmmode \mbox{\sc $#1$-NExptime}
	\else {\sc $#1$-NExptime}\xspace\fi}
\DeclareRobustCommand{\kEXPSPACE}[1][k]{\ifmmode \mbox{\sc $#1$-Expspace}
	\else {\sc $#1$-Expspace}\xspace\fi}
\protected\def\ELEMENTARY{\ifmmode \mbox{\sc Elementary} \else {\sc Elementary}\xspace\fi}
\protected\def\AEXPpol{\ifmmode \mbox{{\sc A}_{\text{pol}}\EXPTIME} \else
	{\sc A}$_{\text{pol}}$\EXPTIME\fi}
\protected\def\APTIME{\ifmmode \mbox{\sc Aptime} \else {\sc Aptime}\xspace\fi}
\protected\def\AEXPSPACE{\ifmmode \mbox{\sc Aexpspace} \else {\sc Aexpspace}\xspace\fi}
\newcommand{\problemdefinition}[2]{\begin{itemize}
		\item Input: #1
		\item Output: #2\end{itemize}}
\newcommand{\gloups}[1]{\bigcirc}
\tikzstyle{cell} = [draw,minimum height=5mm,minimum width=5mm]
\tikzset{
	cellcolor/.cd,
	0/.style={fill=gray!20!white},
	1/.style={fill=yellow!20!white}
}
\tikzstyle{cellalive} = [fill=yellow!20!white]
\newcommand{\automaton}{{\mathcal A}}
\newcommand{\ie}{i.e.,\ }
\newcommand{\eg}{e.g.\ }
\newcommand{\init}{{\iota}}
\newcommand{\fontforfinitedomainvariables}[1]{\mathsf{#1}}
\newcommand{\variableturn}{\fontforfinitedomainvariables{turn}}
\newcommand{\iplay}{\pi}
\newcommand{\last}{\mbox{last}}
\protected\def\path{\ifmmode \fplay \else path\xspace\fi}
\protected\def\ipath{\ifmmode \iplay \else path\xspace\fi}
\newcommand{\strat}{\sigma}
\newcommand{\eventlettremajuscule}{A}
\newcommand{\modelM}{\kripkemodel}
\newcommand{\pointedepistemicmodel}{\modelM, \world}
\newcommand{\seteventsfor}[1]{\setevents_{#1}}
\newcommand\playerexists[1]{#1^{\exists}}
\newcommand{\N}{\mathbb N}
\newcommand{\epsmodel}{\kripkemodel}
\newcommand{\relworldsi}[1][\agent]{\epistemicrelation{#1}}
\newcommand{\releventsi}[1][\agent]{\epistemicrelationevents{#1}}
\newcommand{\agents}{\agtset}
\newcommand{\valworlds}{\valuationfunction}
\newcommand{\postval}{\valworlds}
\newcommand{\hist}{h}
\newcommand{\ETLforest}[3][*]{{#2}{#3}^{#1}}
\newcommand{\etlmodel}{\mathcal G}
\newcommand{\trans}{\Delta}
\newcommand{\val}{\valuationelement}
\newcommand{\play}{\pi}
\newcommand{\seteventscon}{\mathsf \eventlettremajuscule_{\text{ctr}}}
\newcommand{\seteventsenv}{\mathsf \eventlettremajuscule_{\text{env}}}
\newcommand{\Play}[1]{\text{Plays}_{#1}}
\newcommand{\Hist}[1]{\text{Hist}_{#1}}
\newcommand{\is}[2]{(#1{=}#2)}
\begin{document}

\maketitle

\begin{abstract}
  We define reachability games based  on Dynamic Epistemic Logic (DEL),
where the players' actions are finely described 
as DEL action  models. 
 We first consider
  the setting where an external controller  with perfect information interacts
  with an environment and aims at
  reaching some epistemic goal state  regarding the passive agents of the system. We
study the problem of   strategy existence for the controller,
which generalises the classic epistemic planning problem, and we solve
it  for several types of  actions such as public announcements and
public actions. We then consider a yet richer setting where
  agents themselves are players, whose strategies must be based on
  their observations. We establish several (un)decidability results for the problem
  of existence of a distributed strategy, depending on the type of
  actions the players can use, and relate them to  results from the
  literature on multiplayer
  games with imperfect information.
\end{abstract}

\section{Introduction}
\label{section:introduction}
\newcommand{\numberofsteps}{k}



Many applications fall within the scope of reachability games with imperfect
information, such as video games~\cite{DBLP:conf/fun/CoulombeL18} (Civilization, etc.),
\emph{Kriegspiel} (the
epistemic variant  of
Chess)~\cite{DBLP:journals/geb/Matros18}, 
Hanabi~\cite{baffier2016hanabi}, or contingent and conformant
planning~\cite{geffner2013concise}. 
%
%
%

\begin{table} 
	\newcommand{\newresult}{\cellcolor{lightgray!50!white}}
	\begin{center}
          \scalebox{.8}{
            \def\arraystretch{1.3}
		\begin{tabular}{|c|c|c|c|c|}
			\hline
	& Public  announcements & 
                                            Public actions  &
                                                              Propositional actions & Full \\
			\hline			
{Plan}   & NP-c & 
		 PSPACE-c  & decidable                                                      
                        & undecidable
                  \\
					\hline		
Controller   & \newresult{PSPACE-c} (Th.~\ref{theorem:epistemicgameproblem-publicannouncements-PSPACEcomplete})  &  
	     \newresult	EXPTIME-c (Th.~\ref{theorem:epistemicgameproblem-publicactions-EXPTIMEcomplete}) & \newresult
                                                       decidable
                                                       (Th.~\ref{theorem:epistemicgameproblem-prepostBoolean-decidable}) &
                                                                     undecidable \\ \hline

   & \newresult & \newresult & \newresult      undecidable  (Th.~\ref{theorem:uniformstrategyundecidable}) &    \\
\multirow{-2}{*}{Distributed strategy} &  \newresult
                             \multirow{-2}{*}{PSPACE-c
                             (Th.~\ref{theorem:uniformstrategyexistence-publicannouncements-PSPACEcomplete})}
                               & \newresult  \multirow{-2}{*}{EXPTIME-c (Th.~\ref{theorem:uniformstrategyexistence-publicactions-EXPTIMEcomplete})}  & \newresult decidable case (Th.~\ref{theorem:uniformstrategydecidable})  & \multirow{-2}{*}{undecidable}\\
   		\hline
		\end{tabular}}	
	\end{center}
	\caption{Known  results (new in grey) for  plan,
          controller and distributed strategy synthesis.}
        \label{table:results}
\end{table}

Games with imperfect information are computationally hard, and 
even undecidable for multiple
players~\cite{DBLP:conf/focs/PetersonR79}.
One way to tame this complexity is to make assumptions on how the
knowledge of the different players compare: if all players that cooperate can be ordered in a hierarchy where one knows more than
the next, a
situation  called \emph{hierarchical information}, then the existence of distributed strategies can be
decided~\cite{peterson2002decision,DBLP:journals/acta/BerwangerMB18}. 
Another natural approach is to consider fragments based on classes of
action types, as done  for instance
in~\cite{ramanujam2010communication,DBLP:conf/atal/BelardinelliLMR17,DBLP:conf/fossacs/Bouyer18}
where different kinds of public actions are considered. But the usual
graph-based models of games with imperfect information, where the
players' actions are modelled as labels on the edges,
make it difficult to define subtle properties of actions.


By contrast, Dynamic epistemic logic (DEL) \cite{DitmarschvdHoekKooi}
was designed to describe actions precisely: how they
affect the world and how
they are perceived. 
In particular, classic action types 
such as public/private
announcements or public actions
 correspond to natural classes of DEL \emph{action models}. Also, DEL
extends epistemic logic and hence enables  modelling
\emph{higher-order knowledge}, \ie what an agent knows about what another agent
knows etc, and the evolution of agents' knowledge over  time.
%


A classification of the complexity with respect to action types was
addressed in the literature of \emph{epistemic planning}, a problem that asks
for the existence of a \emph{plan}, \ie a finite sequence of DEL
actions to reach a situation that satisfies some given objective
expressed in epistemic logic.
%
%
%
However this problem, which can be seen as solving one-player reachability
games with epistemic objective,  has never been
considered in a strategic, adversarial context. 
This work bridges the gap between DEL and games by introducing
adversarial aspects in  DEL planning,  thus
moving from plan generation to strategy synthesis. We define two
frameworks for DEL-based
reachability games, where players start in a given epistemic situation
and their possible moves are described by action models, 
and the objective is  to reach a situation
 satisfing some
epistemic formula.


In a first step we consider 
 \emph{open systems}~\cite{harel1985development}, \ie systems that
 interact with an environment. In our setting, two
omniscient, external entities (that we call \emph{controller} and
\emph{environment}) choose in turn which actions are performed. We call
this setting \emph{DEL controller synthesis}. Here, agents involved in the
models and formulas are not active, they merely observe how the system
evolves based on the actions chosen by the controller and the
environment, and update their knowledge accordingly. DEL controller synthesis
extends DEL planning, as the latter is a degenerate case of the former where
the environment stays idle, and we therefore inherit undecidability for the general case.
Nevertheless we show that, as for DEL planning, decidability is regained when
actions do not increase uncertainty (so-called \emph{non-expanding actions}) or when the preconditions of actions are
propositional formulas. More precisely, we show \PSPACE-completeness
when possible moves are public announcements, \EXPTIME-completeness
for the more general public actions, and membership in \kEXPTIME[(k+1)]
for propositional actions when the objectives are formulas  of modal
 depth at most $k$.


We then generalise further this setting by turning agents into
players. Unlike the omniscient controller of the former setting,
agents have imperfect information about the current state of the game,
and can only base their decisions on what they know. In the theory of
games with imperfect information this is modelled by the notion of
\emph{uniform strategies}, also called \emph{observation-based
  strategies}~\cite{apt2011lectures}. 
We study the problem of \emph{distributed strategy synthesis}, where a
group of players cooperate to enforce some objective against the
remaining players.
As for multi-player games with imperfect information the problem
  is undecidable, already for
propositional actions and a coalition of two players. However 
we show
that the two kinds of assumptions that make imperfect-information
games decidable, namely public actions and hierarchical information,
also yield decidable cases of multiplayer DEL games. 
Furthermore, in the case of
public announcements and public actions, the complexity
is not worse than for controller synthesis.

\paragraph*{Related work}
The complexity of DEL-based epistemic planning has been 
thoroughly investigated. It is undecidable already for actions with
preconditions of modal depth one and propositional postconditions
\cite{DBLP:journals/jancl/BolanderA11,DBLP:conf/ijcai/CongPS18}. 
For preconditions of modal depth one and no postconditions the problem
has been open for years, but it is decidable when pre- and postconditions are
propositional~\cite{DBLP:conf/ijcai/YuWL13,DBLP:journals/corr/AucherMP14,AiML2018Gaetanetal}.
It is also known to be \NP-complete for
 public announcements~\cite{bolander2015complexity,PhDTristanCharrier},
and \PSPACE-complete for
public actions~\cite{PhDTristanCharrier}. 

The decidability for propositional actions has been extended 
in~\cite{DBLP:journals/corr/AucherMP14} by considering infinite trees
of actions called \emph{protocols} instead of finite plans, and  specifications in branching-time epistemic
temporal logic instead of reachability for epistemic formulas; this
has been extended further in~\cite{AiML2018Gaetanetal} by enriching
the specification language with Chain Monadic Second-order Logic.
Both results rely on the fact that when actions are propositional,
the infinite structures generated by  repeated application of action models
form a class of \emph{regular
  structures}~\cite{DBLP:journals/corr/AucherMP14,DBLP:phd/hal/Maubert14},
\ie relational structures that have a finite representation
via automata. First-order logic is decidable
on such structures~\cite{blumensath2000automatic}, and chain-MSO is decidable
on a subclass called \emph{regular automatic
  trees}~\cite{AiML2018Gaetanetal}, but neither of these logics can express the existence of
strategies in games. However we will show that  the regular structures
obtained from propositional DEL models can be seen as finite turn-based game
arenas studied in games played on graphs. This allows us to transfer  decidability results on
games with epistemic temporal objectives to the DEL setting.

A notion of cooperative planning in DEL has been studied
in~\cite{DBLP:journals/corr/EngesserBMN17}, but without the
adversarial aspect of games. Also, in~\cite{DBLP:journals/logcom/Lima14}, a game
setting has been developed with the so-called Alternating-time
Temporal Dynamic Epistemic Logic, but it does not consider uniform
strategies and thus cannot express the existence of distributed strategies. Our controller synthesis problem can
be expressed in this logic, but not in the fragment that they
solve, which cannot express reachability. 
On the other hand, several decidability
results for logics for strategic and epistemic reasoning have been
established
recently~\cite{DBLP:conf/ijcai/BelardinelliLMR17,DBLP:conf/kr/MaubertM18},
but they do not offer the fine modelling of actions possible in DEL.
For instance they cannot easily model public announcements, which we
show yield better complexity  than those obtained in their
settings.

Table~\ref{table:results} sums up previous results for epistemic
planning, as well as the results established in this contribution.







\newcommand{\atrouver}[1]{\textcolor{green!80!black}{#1?}}

\section{Background in epistemic planning}
\label{section:backgroundepistemicplanning}

Let us fix a countable set of \emph{atomic propositions} $\atmset$. 

\subsection{The classic DEL setting}
\label{sec-EL}

We recall models of 
epistemic logic~\cite{Fagin95knowledge}. 

\label{sec-defDEL}
\begin{definition}
	\label{definition:epistemicmodel}
	An \indexemph{epistemic model} $\kripkemodel = (\setworlds,
        (\epistemicrelation{\agent})_{\agent \in \agtset},
        \valuationfunction)$ is a tuple  where 
	\begin{itemize}
		\item
 $\setworlds$ is a non-empty finite set of possible \emph{worlds} (or situations),
		\item
 $\relworldsa\subseteq \setworlds\times \setworlds$
		is an
		\emph{accessibility relation} for agent $\agent$, and
		\item
 $\valuationfunction:\setworlds\rightarrow 2^\atmset$ is a \emph{valuation function}.
	\end{itemize}
\end{definition}

\tikzstyle{agenta}=[line width=2]
\tikzstyle{agentb}=[dashed]

We write $\world \relworldsa \worldb$ instead of
$(\world,\worldb)\in\relworldsa$; its intended meaning is that when
the actual world is $\world$, agent~$\agent$ considers that $\worldb$
may be the actual world.  The valuation function $\valuationfunction$
provides the subset of atomic propositions that hold in a world.  A
pair $(\kripkemodel,\world)$ is called a \indexemph{pointed epistemic
  model}, and we let $|\kripkemodel|$ be the \emph{size} of
$\kripkemodel$, defined as
$|\setworlds| + \sum_{\agent \in \agtset} |\epistemicrelation{\agent}|
+ \sum_{\world \in \setworlds} |\valuationfunction(\world)|$. We will
only consider finite models, \ie we assume that
$\valuationfunction(\world)$ is finite for all worlds. 

The syntax of Epistemic Logic $\languageEL$ is given by the following grammar:
\[\phi\grammaris \atom \grammarsep \neg\phi \grammarsep (\phi\lor\phi) \grammarsep \lknow\agent\phi\]
where $\atom$ ranges over $\atmset$ and $\agent$ ranges over
$\agtset$.

$\lknow\agent\phi$ is read `agent $\agent$ knows that
$\phi$ is true'. We define the usual abbreviations
$(\phi_1 \land \phi_2)$ for $\lnot (\lnot \phi_1 \lor \lnot \phi_2)$
and $\lknowpos a \phi$ for $\lnot \lknow a \lnot \phi$, and use 
$\languagePropositional$ for the fragment of $\languageEL$ with 
propositional formulas only. The \emph{modal depth} of a formula is its maximal number of
nested knowledge operators; for instance, the formula
$\lknow \agent \lknow \agentb p \land \lnot \lknow a q$ has modal depth $2$.
The \emph{size} $|\phi|$ of a formula $\phi$ is the number of symbols
in it.

The semantics of $\languageEL$ relies on \emph{pointed epistemic
  models}. 
\begin{definition}
  We define $\kripkemodel, \world\models \formula$, read as
`formula $\formula$ holds in the pointed epistemic model
$(\kripkemodel, \world)$', by induction on $\phi$, as follows:
	\begin{itemize}
		\item $\kripkemodel,\world\models \atom$ if $\atom\in\valuationfunction(\world)$;
		\item $\kripkemodel,\world\models \neg\phi$ if it is not the case that $\kripkemodel,\world\models\phi$;
		\item $\kripkemodel,\world\models (\phi\lor\psi)$ if
		$\kripkemodel,\world\models\phi$ or $\kripkemodel,\world\models\psi$;
\item $\kripkemodel,\world\models\lknow\agent\phi$ whenever for all $\worldb$ such that $\world\relworldsa\worldb$, $\kripkemodel,\worldb\models\phi$.
	\end{itemize}
\end{definition}

Dynamic Epistemic Logic (DEL) relies on \emph{action models} (also
called ``event models'').  These models
specify how agents perceive the occurrence of an action as well as its effects on the world.

\begin{definition}
	\label{def-actionmodel}
	An \emph{action model}
	$\eventmodel = \eventmodeltuple$ is a tuple  where:
	\begin{itemize}
		\item
 $\setevents$ is a non-empty finite set of  possible \emph{actions},
		\item
 $\epistemicrelationevents\agent\subseteq\setevents\times \setevents$
		is the \emph{accessibility
			relation}  for agent $\agent$,
		\item
 $\pre: \setevents\to \languageEL$ provides the \emph{precondition}
 for an action to be performed, and
		\item
		 $\post: \setevents \times \atmset \to
                 \languagePropositional$ provides the
                 \emph{postcondition} (\ie the effects) of an action.
	\end{itemize}
\end{definition}

A \emph{pointed action model} is a pair $(\eventmodel,\event)$
where $\event$ represents the actual action. 
We let $|\eventmodel|$ be the \emph{size} of
$\eventmodel$, defined as
$|\eventmodel|\egdef |\setevents|+\sum_{\agent \in \agtset} |\epistemicrelationevents\agent| + \sum_{\event\in\setevents}|\pre(\event)|+\sum_{\event\in\setevents,p\in\AP}|\post(\event,p)|$.


An action $\event$ is \emph{executable} in a world $\world$ of an
epistemic model $\epsmodel$ if
$\kripkemodel,\world\models\pre(\event)$, and in that case we define
$\postval(\world,\event):=\set{\atom \in \atmset \suchthat
  \kripkemodel, \aworld \models \post(\event, \atom) }$, the set of
atomic propositions that hold after occurrence of action $\event$ in
world $\world$.
Since postconditions are always propositional, we can define
similarly $\postval(\val,\event)$ where $\val\subseteq 2^\atmset$ is a valuation.

\paragraph*{Types of actions} We identify noticeable types of
actions. An action model $\eventmodel$ is \emph{propositional} if all
pre- and postconditions of actions in $\eventmodel$ belong to
$\languagePropositional$. A \emph{public action} is a pointed action
model $\eventmodel,\event$ such that for each
agent $\agent$, $\epistemicrelationevents{\agent}$ is the identity relation.
A \emph{public announcement} is a public
action $\eventmodel,\event$ such that for all $p$,
$\post(\event, p) = p$. 

We recall the product that models how to update an epistemic model
when an action is executed~\cite{baltag1998logic}.

\begin{definition}[Product~\cite{baltag1998logic}]
	\label{definition:product}
	Let
        $\kripkemodel = (\setworlds,
        (\epistemicrelation\agent)_{\agent \in \agtset},
        \valuationfunction)$ be an epistemic model, and\linebreak[4]
        $\eventmodel = (\setevents,
        (\epistemicrelationevents{\agent})_{\agent \in \agtset},\pre,
        \post)$ be  an action model. The \emph{product} of
        $\kripkemodel$ and $\eventmodel$ is defined as
        $\kripkemodel \otimes \eventmodel = (\setworlds',
        (\epistemicrelation\agent)', \valuationfunction')$ where:
	\begin{itemize}
		\item $\setworlds' = \set{(\aworld, \event) \in \setworlds \times \setevents \suchthat \kripkemodel, \aworld \models \pre(\event)}$,
		\item $(\aworld,\event) \epistemicrelation\agent'
                  (\aworld',\event')$ if $\aworld
                  \epistemicrelation\agent \aworld'$ and $\event
                  \epistemicrelationevents{\agent} \event'$, and
		\item $\valuationfunction'((\aworld, \event)) = \postval(\world,\event)$.
	\end{itemize}
\end{definition} 



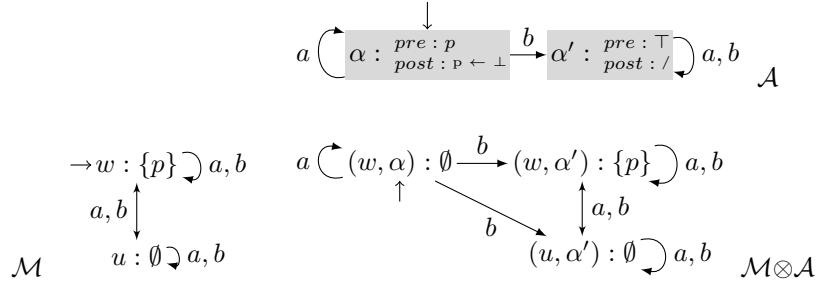
\begin{figure}
	\begin{center}

		\scalebox{1}{
		\begin{tabular}{ll}
			&
			\begin{tikzpicture}[scale=0.8]
			\tikzstyle{double_edge} = [latex'-latex',double]
			\node[event, realworldarrowfromtop] (p) at (0,0) {$\event: \eventinfigure{p} \postconditionpfalse$};
			\node[event] (notp) at (3,0) {$\eventb: \eventinfigure{\top} \postconditiontrivial$};
			\draw[-latex] (p) edge  node[above] {$\agentb$} (notp);
			\draw[-latex] (p) edge[loop left, looseness=2] node[left] {$\agent$} (p);
			\draw[-latex] (notp) edge[loop right, looseness=2] node[right] {$\agenta, \agentb$} (notp);
			\end{tikzpicture} $\eventmodel$ \\[4mm]
			
			$\kripkemodel$
			\begin{tikzpicture}[scale=0.6]
			\tikzstyle{double_edge} = [latex'-latex',double]
			\node[world, realworldarrowfromleft] (p) at (0,0) {$\world: \{p\}$};
			\node[world] (notp) at (0,-2) {$\worldb: \emptyset$};
			\draw[double_edge] (p) edge  node[left] {$\agenta, \agentb$} (notp);
			\draw[-latex] (p) edge[loop right,  looseness=2.5] node[] {$\agenta, \agentb$} (p);
			\draw[-latex] (notp) edge[loop right,  looseness=2.5] node[] {$\agenta, \agentb$} (notp);
			\end{tikzpicture}

			&

			\begin{tikzpicture}[xscale=0.8, yscale=0.6]
			\tikzstyle{double_edge} = [latex'-latex',double]
			\node[world, realworldarrowfrombottom] (p) at (0,0) {$(\world, \event): \emptyset$};
			\node[world] (p2) at (3,0) {$(\world, \eventb): \{p\}$};
			\node[world] (notp) at (3,-2) {$(\worldb, \eventb): \emptyset$};
			\draw[-latex] (p) edge  node[above] {$\agentb$} (p2);
			\draw[-latex] (p) edge  node[below] {$\agentb$} (notp);
			\draw[-latex] (p) edge[loop left, looseness=2] node[left] {$\agent$} (p);
			\draw[-latex] (notp) edge[loop right, looseness=2] node[right] {$\agenta, \agentb$} (notp);
			\draw[-latex] (p2) edge[loop right, looseness=2] node[right] {$\agenta, \agentb$} (p2);
			\draw[double_edge] (p2) edge  node[right] {$\agenta, \agentb$} (notp);
			\end{tikzpicture}
			$\kripkemodel {\otimes} \eventmodel$
		\end{tabular}}
	\end{center}
	\caption{Example of DEL product. Symbol \;$\postconditiontrivial$\; indicates
        the trivial postcondition that leaves valuations unchanged.}
	\label{figure:productexample}
\end{figure}

\begin{example}
  Figure \ref{figure:productexample} shows the pointed model
  $\kripkemodel, \world$ that represents a situation in which $p$ is
  true and both agents $\agent$ and~$\agentb$ do not know it. The
  pointed action model $\eventmodel, \event$ describes the action where
  agent $\agent$ learns that $p$ was true but that it is now set to
  false, while agent $\agentb$ does not learn anything (she sees action
  $\eventb$ that has trivial pre- and postcondition). In the product 
epistemic model $\kripkemodel \otimes \eventmodel, (w, \event)$,
  agent $\agent$ now knows that $p$ is false, while
  $\agentb$ still does not know the truth value of $p$, or whether agent $a$ knows it.
\end{example}

An epistemic model (resp.\ an action model) is \emph{S5} if all
accessibility relations are equivalence relations. This property is
important to model games with imperfect information, and we will
assume it in Section~\ref{section:imperfectinfoepistemicgames}.


\subsection{Generated structure}
\label{sec-generated}

Iteratively executing an action model from an initial epistemic model
generates an infinite sequence of epistemic models, whose union yields an infinite epistemic structure where dynamics are
represented by the possible sequences of actions, while information is
captured by the accessibility relations.

\begin{definition}[Generated structure] \index{DEL-generated model}\index{$\ETLforest{\epsmodel}{\eventmodel}$}
\label{def-DELforest}
Given
$\epsmodel=(\setworlds,\{\relworldsi\}_{\agent\in\agents},\valworlds)$
 an epistemic model and\linebreak[4]
$\eventmodel=(\setevents,\{\releventsi\}_{\agent\in\agents},\pre,\post)$
an action model,
we define the family of disjoint epistemic models $\{\ETLforest[n]{\epsmodel}{\eventmodel}\}_{n\geq 0}$
by letting $\ETLforest[0]{\epsmodel}{\eventmodel}=\epsmodel$ and
$\ETLforest[n+1]{\epsmodel}{\eventmodel}=\ETLforest[n]{\epsmodel}{\eventmodel}\otimes\eventmodel$.
We finally define the infinite epistemic model
$\ETLforest{\epsmodel}{\eventmodel}=\bigcup_{n\in\N}\ETLforest[n]{\epsmodel}{\eventmodel}$. 
\end{definition}

In the following we identify objects of the form
$(\ldots((w,\event_1),\event_2),\ldots \event_n)$ with $(w,\event_1,\ldots,\event_n)$. Anticipating the game setting we later define, we call a \emph{play}
an infinite sequence $\play=\world\event_1\event_2\ldots$ such that
all finite prefixes of $\play$ are in
$\ETLforest{\epsmodel}{\eventmodel}$. A \emph{history} is a finite
prefix $\hist$ of a play. We let
$\Play{\ETLforest{\epsmodel}{\eventmodel}}(\world)$ and
$\Hist{\ETLforest{\epsmodel}{\eventmodel}}(\world)$ be, respectively,
the set of all plays and histories in
${\ETLforest{\epsmodel}{\eventmodel}}$ that start with $\world$. 
These definitions entail the following.

\begin{lemma}
\label{lem-sem-DEL-ETL}
 For every world
$(\world,\event_1,\ldots,\event_n)\in \epsmodel\otimes\eventmodel^n$, 
and every formula $\phi\in\languageEL$, 
\[\epsmodel\otimes\eventmodel^n,(\world,\event_1,\ldots,\event_n)\models\phi
\mbox{ iff }\ETLforest{\epsmodel}{\eventmodel},\world\event_1\ldots\event_n\models\phi.\]
\end{lemma}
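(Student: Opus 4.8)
The plan is to prove both directions of the equivalence simultaneously by structural induction on $\phi$, exploiting the fact that, by construction, $\ETLforest{\epsmodel}{\eventmodel}$ is the \emph{disjoint} union of the finite products $\ETLforest[n]{\epsmodel}{\eventmodel}=\epsmodel\otimes\eventmodel^n$. First I would record the structural observation that carries the whole argument. Since the models $\{\ETLforest[n]{\epsmodel}{\eventmodel}\}_{n\geq 0}$ are pairwise disjoint and $\ETLforest{\epsmodel}{\eventmodel}$ is their union, each accessibility relation $\relworldsa$ of $\ETLforest{\epsmodel}{\eventmodel}$ introduces no edge between worlds of different levels; moreover, by Definition~\ref{definition:product}, the product relation already connects $(\aworld,\event)\relworldsa'(\aworld',\event')$ only when both worlds sit at the same level. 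Consequently, for a world $h=(\world,\event_1,\ldots,\event_n)$ of level $n$, the set of $\agent$-successors of $h$ in $\ETLforest{\epsmodel}{\eventmodel}$ coincides exactly with the set of $\agent$-successors of $(\world,\event_1,\ldots,\event_n)$ in $\epsmodel\otimes\eventmodel^n$, and the valuation assigned to $h$ by $\ETLforest{\epsmodel}{\eventmodel}$ is inherited unchanged from $\epsmodel\otimes\eventmodel^n$. Both facts are immediate from Definition~\ref{def-DELforest} and the definition of disjoint union.

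With this in hand the induction is routine. For $\phi=\atom$ atomic, both sides hold iff $\atom$ belongs to the common valuation of the world, so the equivalence is immediate. The Boolean cases $\neg\psi$ and $(\psi_1\lor\psi_2)$ follow directly from the induction hypotheses applied at the same world $h$. For $\phi=\lknow\agent\psi$, the semantics require $\psi$ to hold at every $\agent$-successor of $h$; by the structural observation these successors are literally the same worlds in both structures, and each of them is again a level-$n$ world of the form $(\world',\event'_1,\ldots,\event'_n)$. Applying the induction hypothesis to $\psi$ at each such successor then yields the equivalence for $\lknow\agent\psi$.

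The only point requiring any care—and the step I would isolate as the crux—is the structural observation that $\agent$-successors never leave level $n$. This is exactly what guarantees that the quantification inside $\lknow\agent$ ranges over the same domain in the finite product and in the infinite generated structure. It hinges on the disjointness stipulated in Definition~\ref{def-DELforest}, together with the fact that $\languageEL$ contains only the local knowledge modalities and no dynamic ``after an action'' operator that could cross levels. Once this is observed, no genuine obstacle remains and the result follows by the induction above.
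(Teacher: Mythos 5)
Your proof is correct and is exactly the argument the paper has in mind: the paper omits the proof entirely, merely asserting that ``these definitions entail the following,'' and the intended justification is precisely your observation that the disjointness of the levels $\ETLforest[n]{\epsmodel}{\eventmodel}$ together with the level-preserving nature of the product relation makes the $\agent$-successors and valuations of a level-$n$ world identical in both structures, after which the induction on $\phi$ is routine. Nothing is missing.
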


This shows that the alternative  definition of epistemic
planning given in the next section is equivalent to the usual one.

\subsection{Epistemic planning}
\label{sec-epistemic-planning}

The epistemic planning problem 
%
 asks for the existence of an executable  sequence of designated actions
$\event_1, \dots, \event_n$  in an action model
$\eventmodel$, whose execution from
$\pointedepistemicmodel$ leads to a situation satisfying some
objective expressed as an epistemic logic formula. 
Formally, we consider the following problem.


\begin{definition}[Plan existence problem]
	\label{def-planning}
	~
	
	\problemdefinition{a pointed epistemic model $\pointedepistemicmodel$, an action model $\eventmodel$ 
                and an \emph{objective} formula $\formula \in \languageEL$;}{yes if there is a
                history $\hist$ in
                $\Hist{\ETLforest{\epsmodel}{\eventmodel}}(\world)$
                such that
                $\ETLforest{\epsmodel}{\eventmodel},\hist\models\formula$.}

            \end{definition}

            \begin{remark}
              Note that usual formulations of the plan existence
              problem consider a set of distinct pointed action models
              $(\eventmodel_1,\event_1),\ldots,(\eventmodel_n,\event_n)$
              instead of one action model $\eventmodel$. Both
              formulations are equivalent, in the sense that they are
              interreducible in linear time.
            \end{remark}

            Main known results on the plan existence problem are
            summarised in Table~\ref{table:results}, while the
            relevant pointers to the literature are given in the
            related work paragraph of the introduction.

 
%
%
%
%
%

We recall some standard notions
            and notations that we will need in the rest of the paper.  A \emph{finite}
            (resp. \emph{infinite}) \emph{word} over some alphabet $\Sigma$ is an
            element of $\Sigma^{*}$ (resp. $\Sigma^{\omega}$).  The
            \emph{length} of a finite word $w=w_{0}w_{1}\ldots w_{n}$
            is $|w|\egdef n+1$, and $\last(w)\egdef w_{n}$ is its last
            letter.  Given a finite (resp. infinite) word $w$ and
            $0 \leq i < |w|$ (resp. $i\in\N$), we let $w_{i}$ be the
            letter at position $i$ in $w$, $w_{\leq i}:=w_0\ldots w_i$
            is the prefix of $w$ that ends at position $i$ and
            $w_{\geq i}:=w_iw_{i+1}\ldots$ is the suffix of $w$ that
            starts at position $i$.
We also use variables $x$ that range over some finite
domain. We
will write $(x = d)$ for the fact ``the value of $x$ is $d$'', and use
$x:=d$ for the effect of setting $x$ to value $d$. This can all be encoded with atomic propositions.


\section{Controller synthesis}
\label{section:controller}
We first generalise the plan existence problem to the setting where
some environment may perturb the execution of the plan that should
thus be robust against it.

Formally, we consider  an initial  epistemic model
$\kripkemodel$, as in Definition~\ref{definition:epistemicmodel},
with an initial world $\world_\init$, and an action model
$\eventmodel = (\setevents, (\epistemicrelationevents{\agent})_{\agent
  \in \agtset},\pre, \post)$ whose set of actions
$\setevents$ is partitioned into
actions in $\seteventscon$ controlled by a \emph{Controller} and actions in $\seteventsenv$ controlled by the \emph{Environment}.

Controller and Environment play in turn: in each round, 
Controller first chooses to execute an action in
$\seteventscon$, then Environment chooses to execute an
action in
$\seteventsenv$. Thus instead of seeking a history in
$\ETLforest{\epsmodel}{\eventmodel}$ that reaches an objective
formula, as in epistemic planning, one seeks a \emph{strategy} for Controller: formally, it is
a partial function
$\strat:\Hist{\ETLforest{\epsmodel}{\eventmodel}}(\world_\init)\rightharpoonup\seteventscon$
defined on histories of odd length (when it is the
controller's turn). An \emph{outcome} of a strategy $\strat$ is a play
$\pi=\world_\init\event_1\event_2\ldots$ in which the controller
\emph{follows} $\strat$, \ie for all $i\in\N$,
$\event_{2i+1}=\strat(\play_{\le 2i}) \in \seteventscon$, while the other
actions, of the form $\event_{2i+2}$, are selected by
the environment.
A  strategy $\strat$ for Controller is \emph{winning} for an objective formula $\formula\in\languageEL$ if for every outcome $\play$ of $\strat$, there exists $i\in\N$ \suth
$\ETLforest{\epsmodel}{\eventmodel},\play_{\le i}\models
\formula$.

\begin{definition}[The controller synthesis problem]
	\label{definition:epistemicgameproblem}
	~
	
	\problemdefinition{a pointed epistemic model $\kripkemodel,
          \world_\init$, action model $\eventmodel$ with
          $\setevents=\seteventscon\uplus\seteventsenv$,  and  an objective
		$\formula \in \languageEL$;}{yes if there exists a
                winning  strategy for Controller for objective $\formula$; no otherwise.}
\end{definition}

\begin{remark}
  \label{rem-synthesis}
  Formally, we define and study the problem of existence of a
  strategy. We take the liberty to call the problem \emph{controller
    synthesis} because all the algorithms we provide can produce a
  winning strategy whenever there exists one. The same remark applies to
  the distributed strategy synthesis problem defined in the next section.
\end{remark}

As the plan existence problem reduces to the controller synthesis
problem, the undecidability of the former entails the one of the
latter.  We next establish that in all known subcases where the plan
existence problem is decidable, so is the controller synthesis
problem.

\subsection{The case of non-expanding action models}

We consider so-called \emph{non-expanding} action models where actions
do not expand epistemic models when executed, like public actions. For
this type of actions, the search space is finite and thus the problem
is decidable. 
We establish the precise computational
complexity of the problem in these cases.

\begin{theorem}
	\label{theorem:epistemicgameproblem-publicannouncements-PSPACEcomplete}
	When actions are public announcements, the controller synthesis problem is \PSPACE-complete.
\end{theorem}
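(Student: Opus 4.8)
The plan is to prove \PSPACE-membership and \PSPACE-hardness separately. For membership I first exploit that public announcements are non-expanding. By Definition~\ref{definition:product}, the product of an epistemic model $\kripkemodel$ with a public announcement whose actual event has precondition $\chi$ is, up to isomorphism, the restriction $\kripkemodel\restrictedto{S}$ of $\kripkemodel$ to $S=\{w\mid\kripkemodel,w\models\chi\}$: valuations are unchanged because postconditions are trivial, and the accessibility relations are inherited because the event relations are the identity. Hence along any history $\world_\init\event_1\ldots\event_m$ the reached model is isomorphic to $\kripkemodel\restrictedto{S}$ for some $S\subseteq\setworlds$ with $\world_\init\in S$ (the actual world always survives, since only executable actions are played), and by Lemma~\ref{lem-sem-DEL-ETL} the truth of the objective and of every precondition depends only on $S$. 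The whole game therefore collapses to a \emph{finite} reachability game whose positions are pairs $(S,\text{turn})$; a Controller (resp.\ Environment) move replaces $S$ by $S'=\{w\in S\mid \kripkemodel\restrictedto{S},w\models\pre(\event)\}$ for some executable $\event\in\seteventscon$ (resp.\ $\seteventsenv$), and Controller wins as soon as $\kripkemodel\restrictedto{S},\world_\init\models\formula$.

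This arena has exponentially many positions, so solving it directly would only give \EXPTIME; the crux is to recover \PSPACE by bounding the number of \emph{useful} rounds. Since announcements only delete worlds, $|S|$ never increases, and the objective stays false as long as $S$ is unchanged; in particular a play along which $S$ is never strictly shrunk never reaches the objective and is lost by Controller. I will prove the sharper statement that an optimal Controller may be assumed to strictly shrink $S$ after \emph{every} full round, by a rank analysis of the attractor of the objective positions: an idle round, in which both players leave $S$ unchanged, returns to the very same position and hence cannot lower its attractor rank, so it is never needed in a winning strategy; moreover, when Controller is forced to leave $S$ unchanged, minimality of the rank forces every Environment response to strictly shrink $S$. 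Consequently $|S|$ drops by at least one per round, and a winning Controller wins within $n=|\setworlds|$ rounds.

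This yields an alternating algorithm running in polynomial time, whence membership in \PSPACE, since alternating polynomial time coincides with \PSPACE. The algorithm maintains the current set $S$ (an $n$-bit subset of $\setworlds$) together with a round counter initialised to $n$; at a Controller position it existentially guesses an executable $\event\in\seteventscon$, at an Environment position it universally picks an executable $\event\in\seteventsenv$, updates $S$ accordingly, accepts as soon as $\kripkemodel\restrictedto{S},\world_\init\models\formula$, and rejects when the counter reaches $0$ with the objective still unmet. Every step only model-checks $\languageEL$ formulas in $\kripkemodel\restrictedto{S}$, a model with at most $n$ worlds, which is polynomial, and the recursion depth is $n$.

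For \PSPACE-hardness I reduce from the validity of a quantified Boolean formula $\Phi=\exists x_1\forall x_2\exists x_3\cdots x_k\,\psi$, put in strictly alternating prenex form. The initial model has a distinguished world $\world_\init$ and two worlds $w_{i,0},w_{i,1}$ for each variable, with a single agent $a$ relating $\world_\init$ to every $w_{i,b}$; world $w_{i,b}$ carries an index proposition $q_i$, and carries $p_i$ iff $b=1$. The move ``$x_i := b$'' is the public announcement deleting $w_{i,1-b}$, after which $a$ knows the surviving value of $p_i$ among index-$i$ worlds, so the value of $x_i$ is read off by the modal-depth-one formula $\lknow a(q_i\to p_i)$. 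The preconditions are crafted so that, at each turn, the \emph{only} executable actions are the two settings of the current variable, existential variables falling to Controller and universal ones to Environment; this rigidly enforces the quantifier order and alternation and forbids stalling. The objective states that all variables are set and that $\psi$ holds under the read-off valuation; being of the form (all variables set)$\,\wedge\,$(a Boolean combination of the $\lknow a(q_i\to p_i)$), it can become true only after the final round and then holds iff the chosen valuation satisfies $\psi$. As Controller owns the existential and Environment the universal variables, Controller has a winning strategy iff $\Phi$ is valid, and all components are of polynomial size. The delicate point here is that, for a public announcement, the announced formula serves at once as executability guard and as world-filter, so the turn-enforcing preconditions must be false at $\world_\init$ outside their intended round yet delete exactly one world within it; combined with the per-round strict-decrease lemma underlying the \PSPACE bound, these are the two steps I expect to require the most care.
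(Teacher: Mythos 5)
Your proposal is correct and follows essentially the same route as the paper: an alternating polynomial-time algorithm exploiting that public announcements only delete worlds (so a winning Controller needs at most $|\setworlds|$ rounds, a bound the paper asserts and you justify via the rank argument), plus a QBF reduction in which announcements delete variable-worlds and an epistemic operator reads off the chosen valuation. The one point you flag as delicate --- preconditions acting both as executability guard and world-filter --- is resolved in the paper exactly as you would need to: the turn-enforcing conjunct is a purely epistemic formula (hence uniformly true or false across all worlds under the universal relation) conjoined with a propositional filter such as $\lnot q_i$ or $\lnot p_i \land \lnot q_i$.
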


  \begin{proof}
      Since applying public
  announcements to epistemic models only removes worlds, and does not
  change those that remain, the number of
  successive public announcements to consider can be bounded by the number of
  worlds in the initial epistemic model.
We can thus solve the problem with an alternating algorithm that runs in polynomial time,
   guessing existentially actions of the controller and universally
  those of the environment. The algorithm is given in Figure~\ref{figure:algoPSPACEpublicannouncements}.

  \begin{figure}
  	\begin{algo}
  		\begin{algoblocprocedure}{controllerSynthesisPublicAnnouncements($\kripkemodel,
  				\world_\init$, $\eventmodel$,
  				$\setevents=\seteventscon\uplus\seteventsenv$, 
  				$\formula$)}

  			\vspace{1ex}
  			set $\kripkemodel,
  			\world_\init$ as the current pointed epistemic model;

  			\begin{algoblocfor}{$i := 0$ to the number of
                            worlds in $\kripkemodel$}

                          \vspace{1ex}                          
  				\algoif the current pointed epistemic model satisfies $\formula$ \algothen \algoaccept;

				\algoif $i$ is even \algothen
                                existentially choose $\event \in
                                \seteventscon$ that is executable in
                                the current pointed epistemic model
                                (fail if no such action exists);
				
				\algoif $i$ is odd \algothen
                                universally choose $\event \in
                                \seteventsenv$ that is executable in
                                the current pointed epistemic model
                                (fail if no such action exists);
				
				set $\kripkemodel \otimes \eventmodel, (\world, \event)$ as the current pointed epistemic model, where $\kripkemodel, \world$ was the previous current pointed epistemic model
  			\end{algoblocfor}

  			\algoreject
  			
  			\end{algoblocprocedure}
  	\end{algo}
\caption{Alternating algorithm for deciding in polynomial-time the controller synthesis problem when actions are public announcements.  	\label{figure:algoPSPACEpublicannouncements}}
  \end{figure}

  We conclude by recalling that alternating
  polynomial time corresponds to deterministic polynomial space~\cite{chandra1976alternation}.
Note that checking epistemic formulas (preconditions and
$\formula$) in epistemic models, and thus also computing the update product, can be performed in
  polynomial time.

	\newcommand{\timei}{\lbigand_{j=1}^{i-1} \lknow \agent \lnot q_j \land \lbigand_{j=i}^{2k}  \lknowpos \agent  q_j}
  PSPACE-hardness comes from a polynomial reduction from TQBF (True
  Quantified Boolean Formulae) which is PSPACE-complete
  \cite{sipser2006introduction}.
  A QBF formula
  $$\exists p_1 \forall p_2 \dots \exists p_{2k-1} \forall p_{2k}
  \chi(p_1, \dots, p_{2k})$$ is transformed in the following instance
  of the controller existence problem:
	\begin{itemize}
		\item $\kripkemodel$ is the pointed Kripke model made
                  up of a $\set{q_i}$-world and $\set{p_i}$-world for
                  all $i \in \set{1, \dots, 2k}$ and an extra
                  $\vide$-world $w$, which is the pointed world; the epistemic relation for agent $\agent$ is universal;

                  \begin{center}
                  	\begin{tikzpicture}
                  	\draw[rounded corners=5mm] (-2, 4.5) rectangle (2, -0.5);
                  	\node [realworldarrowfromleft] {$w : \emptyset$};
                  	\node  at (-1, 3) {$\vdots$};
                  	\node  at (1, 3) {$\vdots$};
                  	\foreach \y/\i in {1/1, 2/2, 4/{2k}} 
                  	{
                  	\node at (-1, \y) {$w_{\i} : \set{p_{\i}}$};	
                  	\node at (1, \y) {$u_{\i} : \set{q_{\i}}$};}
                  	\end{tikzpicture}
                  \end{center}
		\item The possible announcements are
\[\varphi_{\neg p_i}=\timei \land \lnot p_i
                    \land  \lnot q_i \quad     \mbox{and}  \quad \varphi_{p_i}=\timei \land \lnot q_i\] for  $i \in \set{1,\dots, 2k}$.

                  They belong to the controller when $i$ is odd, and to the environment when $i$ is even;
		\item The goal is $\lbigand_{j=1}^{2k} \lknow \agent \lnot q_j \land \chi(\lknowpos \agent p_1, \dots, \lknowpos \agent p_{2k})$.
	\end{itemize}
        In the model $\modelM$, worlds $w_i$ are
        used to encode assignments of truth values to atoms $p_i$:
        removing world $w_i$ means setting $p_i$ to true, while
        keeping it means setting $p_i$ to false. Worlds $u_i$, bearing atoms $q_i$, are
        used to enforce that the value of each atom $p_i$ is set
        exactly once. In announcements $\phi_{p_i}$ and $\phi_{\neg
          p_i}$, conjunct $\timei$ implies
        that worlds $u_1,\ldots,u_{i-1}$ have already been removed,
        while worlds $u_i,\ldots,u_{2k}$ are still in the model. Thus
 announcements $\phi_{p_i}$ and $\phi_{\neg p_i}$
        are possible in round $i$, and only there.

        Now observe that announcement $\phi_{\neg p_i}$, because of
        conjunct $\neg p_i \land \neg q_i$, removes both world $w_i$
        and world $u_i$, thus setting $p_i$ to true. Announcement
        $\phi_{p_i}$ instead removes only world $u_i$, thus setting
        $p_i$ to true.

In the goal formula, $\lbigand_{j=1}^{2k} \lknow \agent \lnot q_j$
means that all the variables $p_1, \dots, p_{2k}$ have been
assigned. The clause
$\chi(\lknowpos \agent p_1, \dots, \lknowpos \agent p_{2k})$ is the
formula $\chi(p_1, \dots, p_{2k})$ in which we replaced $p_i$ by
$\lknowpos \agent p_{i}$, which holds if and only if world
$w_i$ has not been removed by announcements, \ie if and only if
announcement $\phi_{p_i}$ was chosen at round $i$.

The fact that the announcements that assign values to
$p_1, p_3, \dots$ are assigned to the controller and that the
announcements that assign values to $p_2, p_4, \dots$ are played by
the environment reflects the alternation of quantifiers in the formula
$\exists p_1 \forall p_2 \dots \exists p_{2k-1} \forall p_{2k}
\chi(p_1, \dots, p_{2k})$.
\end{proof}
%

\begin{theorem}
	\label{theorem:epistemicgameproblem-publicactions-EXPTIMEcomplete}
The controller synthesis problem for public actions is \EXPTIME-complete.
\end{theorem}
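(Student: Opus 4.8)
The plan is to prove the two bounds separately, reusing the overall structure of the proof of Theorem~\ref{theorem:epistemicgameproblem-publicannouncements-PSPACEcomplete} but accounting for the fact that public actions, unlike public announcements, carry postconditions and may therefore modify valuations.

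For membership in \EXPTIME, the key observation is that public actions are non-expanding: when a public action pointed at $\event$ is applied, the generated submodel of the new pointed world $(\world,\event)$ consists only of worlds of the form $(\worldb,\event)$ with $\worldb$ a world of the current submodel satisfying $\pre(\event)$, so its number of worlds never exceeds $|\setworlds|$. Hence, if at each round we keep only the generated submodel of the current pointed world (which preserves the truth of all $\languageEL$ formulas), the current pointed epistemic model stays of size polynomial in the input along every play. I would therefore design an alternating algorithm analogous to the one in Figure~\ref{figure:algoPSPACEpublicannouncements}: it stores the current, polynomial-size pointed model, existentially guesses Controller's actions from $\seteventscon$ and universally guesses Environment's actions from $\seteventsenv$, updates the model by the product followed by restriction to the generated submodel, and accepts as soon as $\formula$ holds. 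The only difference with the public-announcement case is the bound on the number of rounds: since postconditions may change valuations, the reachable models are no longer nested and there can be exponentially many of them, namely at most $2^{p(n)}$ for some polynomial $p$, as a model is determined by a subset of the at most $|\setworlds|$ surviving worlds, a designated world, and a valuation of these worlds over the polynomially many atoms occurring in $\eventmodel$ and $\formula$. As in any reachability game, if Controller can force reaching the objective then it can do so within a number of rounds bounded by the size of the arena; I would thus let the algorithm maintain a binary counter of polynomially many bits and reject any branch that runs for more than $2^{p(n)}$ rounds without satisfying $\formula$. This yields an alternating algorithm running in polynomial space, and we conclude with $\APSPACE = \EXPTIME$~\cite{chandra1976alternation}.

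For \EXPTIME-hardness I would reduce from the acceptance problem of an alternating Turing machine $M$ working in polynomial space, which is \EXPTIME-complete. A configuration of $M$ is encoded propositionally by polynomially many atoms describing, in one-hot fashion, the content of each tape cell, the head position and the control state; the initial epistemic model is a single reflexive world whose valuation encodes the initial configuration, and since public actions preserve the single-world structure the game stays on one world whose valuation is the current configuration. Each transition of $M$ becomes a public action whose precondition checks, propositionally and locally, that the source state and the scanned symbol make it applicable, and whose postcondition rewrites the cell, moves the head and updates the state. Controller owns the actions corresponding to transitions out of existential states, Environment those out of universal states, and the objective $\formula$ is the propositional formula asserting that the control state is accepting. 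The reduction is clearly computable in polynomial time.

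The delicate point, and the main obstacle, is to reconcile the strict Controller/Environment alternation imposed by the game with the arbitrary interleaving of existential and universal states along a computation. I would solve this by equipping each player with a \emph{no-op} action, arranged so that on her turn a player can fire a genuine transition exactly when she owns the current control state and is otherwise forced, by the preconditions, to play the no-op, which leaves the configuration unchanged: Controller's transition actions require an existential state and her no-op requires a universal one, and symmetrically for Environment. Then each genuine step is performed by the owner of its source state, possibly after the other player has padded with a forced no-op, and no player can stall on a state she owns; adding self-loops on the halting configurations (so that the play cycles forever on a rejecting configuration, which never satisfies $\formula$) guarantees that Controller has a winning strategy if and only if $M$ accepts, the existential/universal structure of $M$ matching exactly the Controller/Environment reachability game.
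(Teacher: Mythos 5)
Your proposal is correct, and the membership argument is essentially the paper's: the same alternating polynomial-space algorithm (store the current pointed model, which stays polynomial because public actions do not add reachable worlds, alternate existential and universal guesses, conclude via $\APSPACE=\EXPTIME$). You are in fact slightly more careful than the paper on two points it leaves implicit: the restriction to the submodel generated by the pointed world (which is what actually keeps the model from growing by a factor $|\setevents|$ at each product), and the explicit $2^{p(n)}$ bound on the number of reachable configurations justifying the termination counter, where the paper appeals to folklore. The hardness direction, however, takes a genuinely different route. The paper reduces from \emph{conditional planning} with full observability and nondeterministic actions (splitting each nondeterministic action into one Controller action that records the choice and $n$ Environment actions that resolve the nondeterminism), and offers a second reduction from the game $G_4$ of Stockmeyer and Chandra; both sources already come with a built-in Controller/Environment alternation, so no padding is needed, and the first reduction has the side benefit of exhibiting controller synthesis as a generalisation of conditional planning. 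You instead reduce directly from acceptance of a polynomial-space alternating Turing machine, which forces you to reconcile the strict turn alternation of the game with arbitrary runs of existential or universal states; your no-op gadget (each player's genuine transitions are guarded by ownership of the current control state, her no-op by the opponent's ownership, plus self-loops on halting configurations so neither player is ever stuck) handles this correctly, and the resulting proof is self-contained where the paper's relies on the known \EXPTIME-hardness of conditional planning. Both reductions are valid; yours trades the connection to the planning literature for independence from it.
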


\isnotlongversion{
\begin{proofsketch}
As for public announcements,  applying a public action in a model does not add
worlds. However it may change facts in the worlds, so that linear
sequences of actions may not suffice.  Nonetheless, we can turn the alternating algorithm from the proof of
Theorem~\ref{theorem:epistemicgameproblem-publicannouncements-PSPACEcomplete}
into one that runs in  polynomial-space.
  The \EXPTIME-membership of the problem follows from the fact that
alternating polynomial space corresponds to exponential time~\cite{chandra1976alternation}. 
  \EXPTIME-hardness is obtained by reduction from the conditional
  planning problem, a variant of classical planning with non-deterministic actions~\cite{DBLP:journals/jair/LittmanGM98,DBLP:conf/aips/Rintanen04a}.
\end{proofsketch}
}

  \begin{proof}
    As for public announcements,  applying a public action in a model does not add
worlds. However it may change facts in worlds, so that 
sequences of actions of linear length may not suffice.  Nonetheless, linear space is enough to store the current pointed
   epistemic model, and we can turn the alternating algorithm from the proof of
Theorem~\ref{theorem:epistemicgameproblem-publicannouncements-PSPACEcomplete}
into one that runs in  polynomial space.  The new algorithm is given in Figure~\ref{figure:algoEXPTIMEpublicactions}, in which we do not bound the length of the sequence of actions.  Note that the algorithm may not terminate, but it is folklore that we can add a counter to ensure termination while staying in polynomial-space; we do not present these tedious technicalities here.

\begin{figure}
	\begin{algo}
		\begin{algoblocprocedure}{controllerSynthesisPublicActions($\kripkemodel,
				\world_\init$, $\eventmodel$,
				$\setevents=\seteventscon\uplus\seteventsenv$, 
				$\formula$)}

                              \vspace{1ex}
                              
			set $\kripkemodel,
			\world_\init$ as the current pointed epistemic model;
			
			$i := 0$;
			
			\begin{algoblocwhile}{the current pointed epistemic model does not satisfy $\formula$}

                          \vspace{1ex}
				\algoif $i$ is even \algothen existentially choose $\event \in \seteventscon$ that is executable in the current pointed epistemic model (fail if no such action);
				
				\algoif $i$ is odd \algothen universally choose $\event \in \seteventsenv$ that is executable in the current pointed epistemic model (fail if no such action);
				
				set $\kripkemodel \otimes \eventmodel, (\world, \event)$ as the current pointed epistemic model, where $\kripkemodel, \world$ was the previous current pointed epistemic model;

				$i := 1 - i$;
				
			\end{algoblocwhile}

			\algoaccept
			
		\end{algoblocprocedure}
	\end{algo}
	\caption{Alternating algorithm for deciding in polynomial-space the controller synthesis problem when actions are public actions.  	\label{figure:algoEXPTIMEpublicactions}}
\end{figure}

  The \EXPTIME-membership of the problem follows from the fact that
alternating polynomial space corresponds to exponential time~\cite{chandra1976alternation}. 

\EXPTIME-hardness is obtained by reduction from the conditional
  planning problem, a variant of classical planning with full observability
and  non-deterministic actions, where the plan should lead to a situation satisfying
  the goal no matter how nondeterminism is resolved. However the plan can
  depend on how nondeterminism is resolved, hence the name
  ``conditional plan''~\cite{DBLP:journals/jair/LittmanGM98,DBLP:conf/aips/Rintanen04a}.

Stated in our terms, conditional planning essentially corresponds to a
particular case of controller synthesis which is purely boolean (no
epistemic content), but where actions chosen by the controller have
nondeterministic effects, and the environment resolves nondeterminism. Since
everything is purely boolean, the initial situation is a one-world
epistemic model, \ie a valuation over a finite set of atoms $\AP$, the
goal is a boolean formula over $\AP$, and
each action is a one-state action model with nondeterministic
postcondition.
A conditional plan is then a winning strategy for the
controller. Thus, to
finish the reduction, we only have to show how to simulate
nondeterministic actions in our setting.

\newcommand{\postvec}{ \overrightarrow{\post}}
\newcommand{\operatorprecondition}{\phi}

In~\cite{DBLP:journals/jair/LittmanGM98,DBLP:conf/aips/Rintanen04a},
a nondeterministic action is modelled as a tuple
$\tuple{\operatorprecondition,\postvec}$ where
$\operatorprecondition$ is a Boolean precondition,
and $\postvec$ is a finite set $\set{\post_1, \ldots, \post_n}$, where
$\post_i:\AP\to \languagePropositional$ is a postcondition. The
idea is that in each round the controller chooses an action among those whose
precondition is true, and
the environment  resolves the
non-determinism by choosing which postcondition of $\postvec$ to apply
to the current valuation.  
For each nondeterministic action
$\tuple{\operatorprecondition,\set{\post_1,\ldots,\post_n}}$  of the conditional planning
instance, we create one action model for the controller that stores in a
finite-domain variable $action$ which action has been played, and $n$
actions for the environment that correspond to the different
possible poscondtions.
 The action for the controller is defined as follows: 
  \[\eventinfigure{\operatorprecondition}{action := \tuple{\operatorprecondition,\postvec}}\]  
  
  while the actions for the environment are, for each $i\in\set{1,\ldots,n}$,
  \[\eventinfigure{action =
      \tuple{\operatorprecondition,\postvec}}{\post_i}\]
  
  This finishes the proof, and also shows how the controller synthesis
  problem subsumes conditional planning.
  We now present  an alternative proof that reduces
   from a more basic decision problem called $G_4$,
  introduced by Chandra and Stockmeyer
  \cite{DBLP:journals/siamcomp/StockmeyerC79}.
This is essentially an adaptation of the 
proof from~\cite{DBLP:journals/jair/LittmanGM98} for the
EXPTIME-hardness of conditional planning.
  
  \newcommand{\propplayerexists}[1]{p_{#1}}
  
  \newcommand{\propplayerforall}[1]{q_{#1}} The input to the $G_4$
  problem is a 13-DNF formula over  $2k$ atomic
propositions
  $\propplayerexists1, \dots, \propplayerexists k$, $\propplayerforall1,
  \dots, \propplayerforall k$ and an initial valuation. Atoms
  $\propplayerexists1, \dots, \propplayerexists k$ are controlled by
  the controller (the existential player) while 
  $\propplayerforall1, \dots, \propplayerforall k$ are controlled by
  the environment (the universal player).  Now, the following game is
  played: each player, when it is her turn to play, flips the
  assignment of one of the variables she controls, and turns alternate. The game stops when the
  13-DNF formula becomes true, and the winner is the player that made  the
last move. 
  An instance of the $G_4$ problem
  is positive if the controller has a winning strategy. 

We construct the following instance of our controller synthesis problem. The initial epistemic model is made up of one world, whose valuation is the initial valuation of $G_4$. Actions of the controller are:

\begin{center}
	$\eventinfigure{\top}{\propplayerexists{1} := \lnot \propplayerexists{1} } ~~~~\dots~~~~
	\eventinfigure{\top}{\propplayerexists{k} := \lnot \propplayerexists{k} }
	 $
\end{center}

Actions of the environment are:

\begin{center}
	$\eventinfigure{\top}{\propplayerforall{1} := \lnot \propplayerforall{1} } ~~~~\dots~~~~
	\eventinfigure{\top}{\propplayerforall{k} := \lnot \propplayerforall{k} }
	$
\end{center}

The goal is the 13-DNF formula.
\end{proof}

Theorem~\ref{theorem:epistemicgameproblem-publicactions-EXPTIMEcomplete}
also generalises to other non-expanding actions models such as the
so-called \emph{separable action models} \cite{PhDTristanCharrier},
where the preconditions of any two actions in the same connected
component are logically inconsistent.

\subsection{The case of propositional action models}
\label{sec-ctr-prop}

To solve our controller synthesis problem we rely on the approach followed
in~\cite{DBLP:phd/hal/Maubert14} to solve the plan existence problem
for propositional actions. This approach has two main ingredients: (I1)
when $\eventmodel$ is propositional, the generated structure
$\ETLforest{\epsmodel}{\eventmodel}$ can be represented finitely, and
(I2) one can decide the existence of a winning strategy in a certain
class of two-player games with epistemic objectives.


\begin{theorem}
	\label{theorem:epistemicgameproblem-prepostBoolean-decidable}
	When action models are propositional, the controller synthesis
        problem is decidable, and in \kEXPTIME[{(k+1)}] if the
        objective's modal depth is bounded by $k$.
\end{theorem}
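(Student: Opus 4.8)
The plan is to realise the two ingredients (I1) and (I2) concretely through a \emph{type} construction that collapses the infinite generated structure $\ETLforest{\epsmodel}{\eventmodel}$ into a finite turn-based reachability game. For $j \geq 0$ define the \emph{level-$j$ type} $\theta_j(\hist)$ of a history $\hist$ by letting $\theta_0(\hist)$ be the valuation at $\hist$ restricted to the finitely many atoms occurring in the input, and
\[
\theta_{j+1}(\hist) = \bigl(\theta_0(\hist),\ (\{\theta_j(\hista) \mid \hist \relworldsa \hista\})_{\agent \in \agtset}\bigr),
\]
where $\relworldsa$ is the accessibility relation of $\ETLforest{\epsmodel}{\eventmodel}$. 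A routine induction on formula structure shows that the truth at $\hist$ of any formula of modal depth at most $j$ depends only on $\theta_j(\hist)$; in particular, both whether the objective $\formula$ (of modal depth at most $k$) holds at $\hist$ and whether a given action is executable at $\hist$ (preconditions being propositional, hence of modal depth $0$) are determined by $\theta_k(\hist)$.

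The crucial step, and the place where propositionality is used, is to show that types evolve \emph{deterministically} under actions, i.e.\ that there is a computable function $g$ with $\theta_k(\hist\event) = g(\theta_k(\hist), \event)$. I would prove the stronger statement, for every $j$, that $\theta_j(\hist\event)$ is a function $g_j(\theta_j(\hist), \event)$, by induction on $j$. For $j=0$ this holds precisely because postconditions are propositional, so $\theta_0(\hist\event) = \postval(\theta_0(\hist), \event)$ depends only on the current valuation and $\event$. For the inductive step, recall from Definition~\ref{definition:product} that the $\agent$-successors of $\hist\event$ in $\ETLforest{\epsmodel}{\eventmodel}$ are exactly the $\hista\eventb$ with $\hist \relworldsa \hista$ and $\event \epistemicrelationevents{\agent} \eventb$. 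Hence the component $\{\theta_j(\hista\eventb) \mid \hist\event \relworldsa \hista\eventb\}$ is obtained from the set $\{\theta_j(\hista) \mid \hist \relworldsa \hista\}$ (which is recorded inside $\theta_{j+1}(\hist)$) by applying $g_j(\cdot,\eventb)$ for each $\eventb$ with $\event \epistemicrelationevents{\agent} \eventb$ (this last set being read off $\event$ and $\eventmodel$). Together with the $j=0$ case for the diagonal component $\theta_0(\hist\event)$, this determines $\theta_{j+1}(\hist\event)$ from $\theta_{j+1}(\hist)$ and $\event$, completing the induction. This is exactly the finite-representation ingredient (I1).

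It then remains to bound the number of types and solve the game, realising (I2). Writing $m$ for the number of atoms in the input, there are at most $N_0 = 2^m$ level-$0$ types, and $N_{j+1} \leq 2^m \cdot 2^{\card{\agtset}\, N_j}$, so that $N_j$ is a tower of exponentials of height $j+1$; in particular there are at most $(k{+}1)$-fold exponentially many level-$k$ types, and each transition $g_k(\cdot,\event)$ is computable in time polynomial in $N_k$ and $|\eventmodel|$. I then build the finite arena whose positions are level-$k$ types together with a bit recording whose turn it is, with initial position $\theta_k(\world_\init)$, with Controller-edges labelled by executable actions of $\seteventscon$ and Environment-edges by executable actions of $\seteventsenv$ (both computed via $g_k$), and with target positions those types that satisfy $\formula$. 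A straightforward correspondence between plays shows that Controller has a winning strategy for $\formula$ iff Controller wins this finite turn-based reachability game, which is decidable in time polynomial in the arena size; overall this yields membership in $\kEXPTIME[{(k+1)}]$ (for $k=0$ this is the single-exponential \EXPTIME\ base case). Decidability for unbounded modal depth follows since any fixed objective has some finite modal depth.

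The main obstacle is the deterministic type-evolution lemma of the second paragraph: it is precisely where the hypothesis that pre- and postconditions are propositional cannot be dropped, since with arbitrary postconditions even the level-$0$ component $\theta_0(\hist\event)$ would fail to be a function of $\theta_0(\hist)$ and $\event$ alone, and the whole finite representation would collapse.
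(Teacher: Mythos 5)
Your proof is correct and reaches the stated $\kEXPTIME[{(k+1)}]$ bound, but it takes a genuinely different route from the paper. The paper keeps the epistemic structure intact: Proposition~\ref{prop-regular} builds a \emph{two-player epistemic game arena} of size only singly exponential in the number of atoms (positions are pairs of a last action and a valuation, still carrying accessibility relations, with the objective evaluated on histories), and then delegates the entire treatment of the modal depth of $\formula$ to an external result (Theorem~\ref{theo-unif-strat}, from Bozzelli--Maubert--Pinchinat) that solves such epistemic games in time $k$-exponential in the arena and the formula. You instead internalise that second step: your level-$k$ types carry out, up front and on the generated structure $\ETLforest{\epsmodel}{\eventmodel}$ itself, the $k$-fold powerset construction that underlies the cited theorem, so that what remains is an ordinary perfect-information reachability game of $(k+1)$-exponential size, solvable in linear time. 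Your version is self-contained and more elementary (no black box), at the cost of modularity: the paper's singly-exponential epistemic arena can be reused unchanged with richer objectives (e.g.\ the epistemic temporal objectives it appeals to in Section~\ref{section:imperfectinfoepistemicgames}), whereas your construction hard-wires the modal depth $k$ into the arena. Two small points to tighten: in the inductive step of the type-evolution lemma, the $\agent$-successors of $\hist\event$ are the $\hista\eventb$ with $\hist \relworldsa \hista$, $\event \epistemicrelationevents{\agent} \eventb$ \emph{and} $\hista \models \pre(\eventb)$ --- you should note that this executability filter is itself determined by the $\theta_0$-component of $\theta_j(\hista)$ precisely because preconditions are propositional, so $g_{j+1}$ remains well defined; and in the play correspondence you should say explicitly that the type arena's edges are labelled by actions, so that its labelled histories are in bijection with those of the original game and the transfer of (positional) winning strategies in both directions is immediate.
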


%

We devote the rest of this section to prove
Theorem~\ref{theorem:epistemicgameproblem-prepostBoolean-decidable}, which requires to introduce particular game arenas. 

\begin{definition}
  \label{def-etl-model}
  A \emph{two-player epistemic  game arena} is a structure\linebreak[4]
  $\etlmodel=(\setworlds,\world_\init,\trans,(\epistemicrelation{\agent})_{\agent
    \in \agtset},\valworlds)$ where $(\setworlds,(\epistemicrelation{\agent})_{\agent
    \in \agtset},\valworlds)$ is an epistemic model, 
  $\setworlds=\setworlds_0\uplus\setworlds_1$ is
  partitioned into the positions of players 0 and 1, $\world_\init$ is
  an \emph{initial world} and $\trans\subseteq\setworlds\times\setworlds$ is a
  \emph{transition relation}.
\end{definition}


A \emph{play} in a game arena $\etlmodel$ is an infinite sequence of worlds
$\play=\world_0\world_1\world_2\ldots$ such that for all $i\in\N$,
$\world_i\trans\world_{i+1}$, and a \emph{history} is a finite
nonempty prefix
of a play. We let $\Play{\etlmodel}$ and $\Hist{\etlmodel}$ be the
sets of plays and histories in $\etlmodel$,
respectively. Accessibility relations
$(\epistemicrelation{\agent})_{\agent \in \agtset}$ are extended to
histories to interpret epistemic formulas: two histories
$\hist=\world_0\ldots\world_n$ and $\hist'=\world'_0\ldots\world'_m$
are related by $\epistemicrelation{\agent}$ whenever $n=m$ (same
length) and $\world_i\epistemicrelation{\agent}\world'_i$ for every
$i\le n$.

A \emph{strategy} for player $0$ is a partial function
$\strat:\Hist{\etlmodel}\rightharpoonup\setworlds$ such that for every $\hist$
with  $\last(\hist)\in\setworlds_0$: $\last(\hist)\trans\strat(\hist)$.
A play
$\play=\world_\init\world_1\world_2\ldots$ is an \emph{outcome} of $\strat$ if for every $i\in\N$
with $\play_i\in\setworlds_1$, we have
$\play_{i+1}=\strat(\play_{\le i})$.
Strategy $\strat$ is
\emph{winning} for an epistemic objective $\varphi \in \languageEL$, if for every outcome $\play$ of $\strat$ 
there is some $i\in\N$ with $\play_{\le i}\models \phi$.
\begin{theorem}[\cite{DBLP:journals/iandc/BozzelliMP15}]
  \label{theo-unif-strat}
  The existence of a winning strategy for player 0 in an epistemic game $\etlmodel$
  for an epistemic objective $\formula$ of
  modal depth $k$ can be
  decided in time $k$-exponential in $|\etlmodel|$ and $|\formula|$. 
\end{theorem}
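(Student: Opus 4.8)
The plan is to compile the epistemic reachability objective into a \emph{plain} reachability objective over a larger, finite \emph{knowledge arena}, and then to solve the resulting perfect-information reachability game by a standard attractor computation. The point to stress first is that, although the objective $\formula$ refers to the knowledge of the observer agents in $\agtset$, the two players in $\etlmodel$ have \emph{perfect information}: a strategy is a function of the whole history (Definition~\ref{def-etl-model}). Hence the only genuine difficulty is that the target set ``histories $\hist$ with $\hist\models\formula$'' is defined epistemically, through the synchronous, perfect-recall relation that relates two equal-length histories componentwise; everything else is a textbook two-player reachability game.

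To evaluate $\formula$ along histories I would track, by a subset construction, exactly the information needed to decide the truth of every subformula. Propositional subformulas are read off the valuation of the current world, so they require no extra memory. A subformula $\lknow\agent\psi$ holds at a history $\hist$ iff $\psi$ holds at every history $\hist'$ of the same length with $\hist\,\epistemicrelation{\agent}\,\hist'$; because the relation is synchronous, the set of worlds that can terminate such an $\hist'$ is a subset of $\setworlds$ that can be maintained deterministically along the play (initialised to the $\epistemicrelation{\agent}$-class of the start world and updated through $\trans$). Tracking one such \emph{information set} per agent turns every modal-depth-$1$ subformula into a property of the current (world, information-sets) configuration, at the cost of one exponential in $|\etlmodel|$.

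The nesting of knowledge is then handled level by level: to evaluate a subformula $\lknow\agent\psi$ whose argument $\psi$ has modal depth $j$, I need to track, among the histories indistinguishable from the current one by agent $\agent$, the set of configurations of the depth-$j$ construction that they induce, that is, a subset of the previous level's state space. Iterating this $k$ times yields a tower of $k$ powerset constructions and a knowledge arena $\hat\etlmodel$ whose size is $k$-exponential in $|\etlmodel|$ and $|\formula|$, in which the objective $\formula$ becomes a Boolean condition on the tracked information, \ie a plain reachability condition. Since the extra bookkeeping of $\hat\etlmodel$ is a deterministic function of the history, strategies on $\etlmodel$ and on $\hat\etlmodel$ are in bijection, and player~$0$ wins the epistemic reachability game on $\etlmodel$ iff it wins the plain reachability game on $\hat\etlmodel$.

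Finally, I would solve the reachability game on $\hat\etlmodel$ by computing player~$0$'s attractor of the target set, which is polynomial in $|\hat\etlmodel|$ and hence $k$-exponential in $|\etlmodel|$ and $|\formula|$ overall. The main obstacle I anticipate is the correctness and the precise size accounting of the tower of subset constructions: I must prove that the information maintained at each level is sound and complete for deciding the corresponding subformulas under synchronous perfect recall, and that adding one level of modal depth costs exactly one more exponential, so that modal depth $k$ yields a $k$-exponential blow-up and not a larger tower.
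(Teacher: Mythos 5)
The paper does not actually prove this statement: it is imported verbatim from the cited reference, so there is no in-paper proof to compare against. Judged on its own, your argument is essentially correct and reconstructs the standard mechanism behind the cited result. The key observations are sound: both players in $\etlmodel$ have perfect information, so the only difficulty is the epistemic target set; the synchronous, componentwise relation on histories makes the set $\{\last(\hist') \mid \hist \epistemicrelation{\agent} \hist'\}$ deterministically updatable; and nested knowledge forces one to track, per agent and per nesting level, the set of lower-level configurations induced by indistinguishable histories, giving one exponential per level of modal depth and a plain reachability game on the resulting arena, solvable by attractor computation. This is the same idea as in the reference (which phrases it with automata and information-set constructions and handles richer temporal objectives), specialised here to reachability. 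Two points deserve explicit care in a full write-up. First, the histories quantified over in the semantics of $\lknow\agent$ are arbitrary finite paths of the arena --- they need not start at $\world_\init$ nor be consistent with any strategy --- so the level-$j$ sets must be initialised over the whole $\epistemicrelation{\agent}$-class of the initial world and propagated through $\trans$ over all paths, not just outcomes. Second, the inductive claim that the level-$(j{+}1)$ set is deterministically updatable requires that each level-$j$ configuration carry the last world of the history it abstracts (to test $\trans$-successorship) and that the level-$j$ update function be applied pointwise inside the tracked set; you flag this as the main obstacle, and it does go through by a routine induction, so there is no gap in substance, only in detail.
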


 We show that the controller synthesis problem for propositional action models
 reduces to solving an epistemic game:
 
\begin{proposition} 
  \label{prop-regular}
Given an instance  $((\epsmodel,\world_\init),\eventmodel,\formula)$ 
of the controller synthesis problem  
where $\eventmodel$ is propositional, one can construct
a game arena $\etlmodel$ 
such that Controller wins in
$((\epsmodel,\world_\init),\eventmodel,\formula)$ iff Player 0 wins in
$\etlmodel$ for objective $\formula$ and 
$|\etlmodel|\le |\epsmodel|+|\eventmodel|\times 2^{m+1}$, where $m$ is
the number of atomic propositions involved in $\epsmodel,\eventmodel$
and $\formula$.
\end{proposition}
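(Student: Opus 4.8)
The plan is to collapse the infinite generated model $\ETLforest{\epsmodel}{\eventmodel}$ to a finite arena by exploiting that, when $\eventmodel$ is propositional, both the executability of an action and the valuation it produces depend only on the current valuation, never on the epistemic history. First I would take as positions of $\etlmodel$ the worlds of $\epsmodel$ together with all pairs $(\event,v)$ with $\event\in\setevents$ and $v$ a valuation over the $m$ atoms occurring in $\epsmodel,\eventmodel,\formula$, and set $\valworlds^{\etlmodel}(\world)=\valworlds(\world)$ and $\valworlds^{\etlmodel}(\event,v)=v$. The intended reading is that an arena history $\world\,(\event_1,v_1)\cdots(\event_n,v_n)$ encodes the generated-model history $(\world,\event_1,\ldots,\event_n)$, where $v_i=\postval(v_{i-1},\event_i)$ and $v_0=\valworlds(\world)$. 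Accordingly I would put a transition from any position with valuation $v$ to $(\event',v')$ exactly when $v\models\pre(\event')$ (meaningful since $\pre(\event')$ is propositional) and $v'=\postval(v,\event')$. The turn structure dictates the partition: $\world_\init$ and every $(\event,v)$ with $\event\in\seteventsenv$ are Player~$0$ (Controller-to-move) positions, while every $(\event,v)$ with $\event\in\seteventscon$ is a Player~$1$ position. Finally I would define the accessibility relations componentwise, mirroring the product: $\world\,\epistemicrelation{\agent}^{\etlmodel}\,\world'$ iff $\world\relworldsi\world'$ in $\epsmodel$, and $(\event,v)\,\epistemicrelation{\agent}^{\etlmodel}\,(\event',v')$ iff $\event\releventsi\event'$ in $\eventmodel$, with no edge between the two groups.

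I would then check that this encoding is a length-preserving bijection between the histories of $\ETLforest{\epsmodel}{\eventmodel}$ and the arena histories starting in a world of $\epsmodel$, and that it preserves valuations by construction. The decisive point is that it also preserves the epistemic relations on histories. On both sides these relations are the componentwise lifting of the position-level ones; and, unfolding Definition~\ref{definition:product}, in the generated model $(\world,\event_1,\ldots,\event_n)\,\epistemicrelation{\agent}\,(\world',\event'_1,\ldots,\event'_n)$ holds iff $\world\relworldsi\world'$ and $\event_i\releventsi\event'_i$ for every $i$, which is exactly what the componentwise relation of $\etlmodel$ yields under the encoding. A routine induction on formulas then shows that a generated-model history and its encoding satisfy the same $\languageEL$ formulas, and Lemma~\ref{lem-sem-DEL-ETL} transfers this to the finite products. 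Note that $(\event,v)$ and $(\event,v')$ may be related even when $v\neq v'$: this is not a defect but precisely correct, since relatedness at depth $i$ depends only on the $i$-th actions and never on the valuations reached---this is the regularity that propositionality provides.

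It remains to match the two games. Controller strategies are defined on histories of odd length, which under the encoding are exactly the arena histories ending in a Player~$0$ position, so Controller strategies correspond bijectively to Player~$0$ strategies and their outcomes correspond under the encoding. Since satisfaction of the objective $\formula$ is preserved history-by-history, a winning Controller strategy maps to a winning Player~$0$ strategy and conversely, which is the required equivalence. For the size, the worlds of $\epsmodel$ account for the $|\epsmodel|$ summand, while the positions $(\event,v)$ number at most $|\setevents|\cdot 2^{m}$; counting these together with their incident transitions and accessibility edges gives $|\etlmodel|\le|\epsmodel|+|\eventmodel|\times 2^{m+1}$, so that $|\etlmodel|$ is in particular single-exponential in $m$.

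The main obstacle is the middle step: verifying that the componentwise history-relation of the finite arena reproduces \emph{exactly} the history-relation of the infinite generated model. This is where propositionality is indispensable---one must ensure that keeping all worlds of $\epsmodel$ (to record the initial uncertainty) while collapsing the whole action history into the single pair (last action, current valuation) neither conflates histories that should differ nor links histories that should not, and that executability and postconditions genuinely factor through the valuation. Once this is established, formula-preservation and the game equivalence follow routinely.
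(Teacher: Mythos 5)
Your construction is essentially the one in the paper: the same positions (worlds of $\epsmodel$ together with pairs recording the last action and the current valuation --- the paper merely adds a turn bit that your partition makes redundant), the same componentwise accessibility relations on positions and hence on histories, the same isomorphism-of-histories argument, and the same size accounting. There is, however, one concrete error in the arena: you put an edge from \emph{any} position with valuation $\val$ to $(\event',\val')$ whenever $\val\models\pre(\event')$, with no restriction on whether $\event'$ belongs to the player who owns the source position, and you rely on the partition alone to encode turns. But in a turn-based graph game the owner of a position chooses freely among \emph{all} outgoing edges, so with your edge relation Player~1 may, at a position $(\event,\val)$ with $\event\in\seteventscon$, move to a successor $(\event'',\val'')$ with $\event''\in\seteventscon$ --- i.e.\ play a Controller action and retain the move --- and symmetrically Player~0 may play Environment actions. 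Outcomes then no longer correspond, and the answers of the two games can differ: take $\seteventscon=\{c\}$ with $\post(c,p)=\top$, $\seteventsenv=\{e\}$ with $\post(e,q)=p$, all preconditions $\top$, initial valuation $\emptyset$ and objective $q$. Controller wins the DEL game in one round ($c$ then $e$), but in your arena the position $(c,\{p\})$ is owned by Player~1, who can move back to $(c,\{p\})$ forever and never set $q$, so Player~0 loses. The fix is exactly the clause the paper's construction contains: an edge leaving a position owned by player $i$ exists only for actions $\event'\in\setevents_i$ (and edges leaving worlds of $\epsmodel$ only for $\event'\in\seteventscon$).

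Two smaller points. First, once the edges are restricted by turn, $\Hist{\etlmodel}$ contains only the alternating histories, so your claim that the arena's history structure is isomorphic to all of $\ETLforest{\epsmodel}{\eventmodel}$ --- the step you rightly identify as decisive, since knowledge operators quantify over same-length related histories --- must be restated for the alternating fragment and one must argue that dropping the non-alternating histories does not affect the truth value of $\formula$ where it is evaluated; the paper's own write-up is equally terse on this, but it is the point your proof promises to, and should, nail down. Second, you assign an owner only to $\world_\init$ among the worlds of $\epsmodel$; the other worlds must still be present (they carry the initial uncertainty through the accessibility relations), though their owner is immaterial since plays start at $\world_\init$.
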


\begin{proof}
Let $\epsmodel=(\setworlds,(\epistemicrelation{\agent})_{\agent
    \in \agtset},\valworlds)$  and 
  $\eventmodel=(\setevents,\{\releventsi\}_{\agent\in\agents},\pre,\post)$, and
    let $\atmsetf$ be the atomic propositions involved. The worlds of the
    game arena $\etlmodel$ that we build are either worlds
    $w\in \epsmodel$ or tuples $(\event,\val,i)$ where
    $\event\in\eventmodel$ represents the last action performed,
    $\val\in 2^{\atmsetf}$ is the current valutation, and $i\in\{0,1\}$
    indicates whose turn it is to play: 0 for Controller and 1 for
    Environment.  In an initial world $\world$, Controller can choose
    an action $\event\in\seteventscon$ such that
    $\world\models\pre(\event)$ and move to
    $(\event,\postval(\world,\event),1)$; in a world of the form
    $(\event,\val,i)$, if $i=1$ (resp., $i=0$), Environment (resp,
    Controller) chooses an action $\event'\in\seteventsenv$ (resp.,
    $\event'\in\seteventscon$) such that $\val\models\pre(\event')$,
    and moves to $(\event',\postval(\world,\event),1-i)$.

Formally, 
  letting $\setevents_0=\seteventscon$ and $\setevents_1=\seteventsenv$,
we  define the epistemic game arena $\etlmodel=(\setworlds',\world_\init,\trans,(\epistemicrelation{\agent}')_{\agent
    \in \agtset},\valworlds')$ as follows:
Let     $\setworlds_0=\setworlds\union \setevents\times2^\AP\times\{0\}$,
 $\setworlds_1=\setevents\times 2^\AP\times\{1\}$, and $\setworlds'=\setworlds_0\cup\setworlds_1$.
For transitions, for $w\in\setworlds$, $\event,\event'\in\setevents$ and
$\val,\val'\in 2^\AP$, we let $\world \trans (\event,\val,1)$ if
    $\event\in\seteventscon, \world \models \pre(\event)$
 and $\val=\postval(\world,\event)$,
and $(\event,\val,i)\trans(\event',\val',1-i)$ if 
  $\event'\in\setevents_i, \val \models \pre(\event')$
 and $\val'=\postval(\val,\event)$.
We also let $\world\epistemicrelation{\agent}'\world'$ if
$\world\epistemicrelation{\agent}'\world'$, and 
  $(\event,\val,i)\epistemicrelation'(\event',\val',i)$ 
if  $\event\releventsi\event'$, and finally
 $\valworlds'(\world)=\valworlds(\world)$ and $\valworlds'(\event,\val,i)=\val$.

 The structure given by the set of histories
$\Hist{\etlmodel}$ and relations $\epistemicrelation{\agent}'$
extended to these histories is isomorphic to
$\ETLforest{\epsmodel}{\eventmodel}$, and histories of odd (resp. even)
length  in $\ETLforest{\epsmodel}{\eventmodel}$ correspond to histories
that end in $\setworlds_0$ (resp. $\setworlds_1$) in $\etlmodel$
(provided they start in $\world_\init$). It follows that there is a winning
strategy for Controller in the original problem if and only if
there is winning strategy for Player 0 in $\etlmodel$ with objective
$\formula$.
\end{proof}

Note that, as stated in Proposition~\ref{prop-regular}, the resulting game arena is indeed polynomial in the size of the epistemic and action models, but
it is exponential in the number of atomic propositions involved in the
problem. This is because states of the
game arena contain all possible valuations.
Theorem~\ref{theorem:epistemicgameproblem-prepostBoolean-decidable}
now follows from Theorem~\ref{theo-unif-strat} and
Proposition~\ref{prop-regular}.

With the controller synthesis problem we enriched  epistemic planning with an
adversarial environment.  Still, as in epistemic planning, the agents 
 are mere observers. We now make a step
further and make the agents players of the game.


\section{Distributed strategy synthesis}
\label{section:imperfectinfoepistemicgames}
In this section agents are no more passive, but instead they are players who choose themselves the
actions that occur. 
The set $\agtset$ of agents is split into two teams $\agtsetexists$ and
$\agtsetforall$ that play against each other, and we may say \emph{players} instead of agents.

 \subsection{Setting up the game}
 Unlike the external controller from the previous section, our players
 now have imperfect information. The fundamental feature of
 games with imperfect information is that when a player cannot
 distinguish between two different situations, a strategy for this
 player should prescribe the same action in both situations. All
 the additional complexity in solving games with imperfect information
 compared to the perfect information setting arises from this
 constraint. Such strategies are often called \emph{uniform} or
 \emph{observation-based} (see for instance~\cite{DBLP:journals/jcss/Reif84,van2001games,apt2011lectures}). 
Since games with imperfect information consider S5
 models, \ie where accessibility relations are equivalence relations,
and it is unclear what  uniform strategies
mean in non-S5 models\footnote{Actually the usual definition seems to make sense for KD45, i.e. models whose relations are serial, transitive, and Euclidean. But
	this would be highly non-standard, and since all the literature on 
	games with imperfect information considers S5 models.},  we also assume from now on that all epistemic
and action models are S5. We stress this assumption by writing
 $\epistemicrelationequiv$ (resp.\ $\epistemicrelationeventsequiv$)
 instead of $\epistemicrelation\agent$ (resp.\
 $\epistemicrelationevents\agent$).

We start from an initial pointed epistemic model
$\kripkemodel,\world$, and an action model
$\eventmodel$ whose set of actions 
is
partitioned into subsets $(\seteventsfor\agent)_{\agent \in \agtset}$ of
actions for each player. 
The game we describe is turn-based. We use 
the variable $\variableturn$ ranging over $\agtset$ to
represent whose turn it is to play. 
We require that for each $\agent\in\seteventsfor\agent$, 
$\pre(\event)$ implies $\is\variableturn\agent$, 
and that postconditions for variable $\variableturn$ do not
    depend on the current world, but instead the next value of
    $\variableturn$ is completely determined by the
    action only.

Moreover, in order to obtain a proper imperfect-information game, we demand the following
hypotheses: 

\paragraph*{Hypotheses on $\kripkemodel$ and $\eventmodel$}
  \label{hyptothesis:conditionsinputstrategyexistenceproblem}
\begin{enumerate}
\item[(H1)] 
  {\it\bf The starting player is known:}
  there is a player $\agent$ such that for all
  $\worldb \in \setworlds$,
  $\kripkemodel,\worldb \models \is\variableturn\agent$;
\item[(H2)] 
  {\it\bf The turn stays known:} for all
  actions $\event, \event' $ and agent $\agent$, if
  $\event R_{\agent} \event'$, then $\event$ and
  $\event'$ assign the same value to
  $\variableturn$.
\item[(H3)] 
  {\it\bf Players know their available actions:} if agent
  $\agent$ can execute $\event$ after history
  $\hist$, then she can also execute it after every history
  $\hist'$ with 
  $\hist\epistemicrelationequiv[\agent]\hist'$. 
\end{enumerate}

All these hypotheses can be either enforced syntactically or checked  in the different decidable cases we
consider in the rest of this work (see the long version for detail).

We now define formally the notion of uniform strategies. 
             
\begin{definition}[Uniform strategy]
  A strategy $\strat$ for player $\agent$ is \emph{uniform} if for every pair
  of histories $\hist,\hist'$ where it is player $\agent$'s turn, $\hist \epistemicrelationequiv \hist'$ entails 
  $\strat(\hist)=\strat(\hist')$.
\end{definition}

In the rest of this section, a strategy of a player in $\agtsetexists$
is implicitly uniform. When one selects a strategy for each player in
$\agtsetexists$, the result is called a \emph{distributed strategy}, and an
\emph{outcome} of a distributed strategy is a play in which all players in
$\agtsetexists$ follow their prescribed strategy. A distributed strategy is
\emph{winning} for an objective formula $\formula$ if all its outcomes eventually satisfy
$\formula$.

\subsection{The distributed strategy synthesis problem}

We study the existence of a distributed strategy for players in
$\agtsetexists$ that ensures to reach an epistemic goal property.


\begin{definition}[Distributed strategy existence problem]
	\label{definition:multiepistemicgameproblem}
	~
	
	\problemdefinition{a pointed epistemic model $\kripkemodel,
          \world$ and an action model $\eventmodel$
partitioned into
          $(\seteventsfor\agent)_{\agent \in \agtset}$ that satisfy
          hypotheses~(H1)-(H3), and an objective formula 
          $\formula \in \languageEL$;}{yes if there exists a winning distributed
          strategy for players in $\agtsetexists$; no otherwise.}
\end{definition}

Unlike the controller synthesis problem which we proved 
decidable for propositional actions, synthesising 
distributed strategies is undecidable for propositional actions, already for a team of two players.



\subsection{Undecidability for two existential players}
\newcommand{\teamdfagame}{\textsf{TEAM DFA GAME}\xspace}

The following Theorem~\ref{theorem:uniformstrategyundecidable} is a reformulation in our setting of the classical undecidability
result from Reif and Peterson \cite{DBLP:conf/focs/PetersonR79}. However, we decide to promote an existing elegant reformulation of that very same result, called \teamdfagame~\cite[Def.~1,
p.~14:7]{DBLP:conf/fun/CoulombeL18}, that can be reduced to our
distributed strategy synthesis problem. 
\begin{theorem}
	\label{theorem:uniformstrategyundecidable}
	The distributed strategy synthesis problem is undecidable,
        already for a propositional action model and two existential
        players against one universal player. 
\end{theorem}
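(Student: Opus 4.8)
The plan is to reduce the \teamdfagame{} problem to our distributed strategy synthesis problem. Recall that \teamdfagame{}~\cite{DBLP:conf/fun/CoulombeL18}, itself a reformulation of the multiplayer imperfect-information undecidability result of Peterson and Reif~\cite{DBLP:conf/focs/PetersonR79}, features a team of two cooperating players with imperfect information playing against a single opponent, where the dynamics and the winning condition are governed by finitely many deterministic finite automata; the question is whether the team has a distributed (observation-based) winning strategy, and this problem is undecidable. I would map the two team players to the two players of $\agtsetexists$ and the opponent to the single player of $\agtsetforall$.

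Given such an instance, I would build an initial S5 pointed epistemic model $\kripkemodel, \world$ and a propositional S5 action model $\eventmodel$ as follows. The current states of all the automata, together with the turn variable $\variableturn$, are encoded in the valuation by means of finite-domain variables (hence by atomic propositions, as allowed by the paper's conventions). Each move available to a player in \teamdfagame{} becomes an event in the corresponding block $\seteventsfor{\agent}$ of $\eventmodel$: its precondition is a propositional formula checking that it is that player's turn (via $\is{\variableturn}{\agent}$) and that the move is legal, and its postcondition deterministically updates the stored automaton states according to the transition functions and advances $\variableturn$. Since automaton transitions depend only on the current state and the symbol read---both recorded in the valuation---all pre- and postconditions are propositional, so $\eventmodel$ is propositional as required.

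The crux is to encode the imperfect information of the two team players. I would define the event accessibility relations $\epistemicrelationeventsequiv[\agent]$ so that two events are indistinguishable to player $\agent$ exactly when they correspond to moves or observations that $\agent$ cannot tell apart in \teamdfagame{} (in particular, a team player does not observe the private moves of its teammate). Because accessibility is lifted to histories componentwise and synchronously (two equal-length histories are $\epistemicrelationequiv[\agent]$-related iff their worlds are $\epistemicrelationequiv[\agent]$-related stepwise), the induced indistinguishability on the histories of $\ETLforest{\kripkemodel}{\eventmodel}$ mirrors each team player's observation history in \teamdfagame{}. Consequently uniform strategies for the players of $\agtsetexists$ are in bijection with observation-based team strategies. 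Taking as objective the propositional formula $\formula$ asserting that every automaton is in an accepting configuration, a distributed winning strategy exists in the constructed game iff the team wins \teamdfagame{}, which yields undecidability. It remains to check hypotheses (H1)--(H3): (H1) holds because $\variableturn$ is fixed in the initial model; (H2) holds because each event sets $\variableturn$ to a value independent of the world, and indistinguishable events set the same next turn; (H3) is ensured by letting an action's precondition depend only on the turn and on facts the acting player observes, so that the set of executable actions is constant across $\epistemicrelationequiv[\agent]$-related histories.

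The main obstacle is precisely the third step: setting up the S5 event relations so that DEL uniform strategies correspond \emph{exactly} to legal non-communicating team strategies---no more and no less---while keeping every precondition and postcondition propositional and simultaneously respecting (H3). The delicate point is that a team player must be unable to infer its teammate's private choices from its own observations, and this must survive the synchronous, componentwise lifting of $\epistemicrelationeventsequiv[\agent]$ to arbitrarily long histories; getting this indistinguishability to match the \teamdfagame{} observation functions faithfully, rather than accidentally leaking or hiding too much information, is where the real care is needed.
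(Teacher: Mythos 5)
Your proposal is correct and follows essentially the same route as the paper: a reduction from \textsf{TEAM DFA GAME} in which the automaton state, the turn and the round step are stored as finite-domain variables in the valuation, moves become propositional events, and the players' partial observations are realised through the event accessibility relations. The delicate point you flag (making uniform strategies coincide exactly with non-communicating team strategies) is resolved in the paper exactly as you anticipate, by making all of $\forall$'s events mutually indistinguishable to both team players and then letting each team player ``learn'' her designated bit via two events with complementary propositional preconditions that she, but not her teammate, can tell apart.
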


\begin{proof}
	\newcommand{\teamdfagameplayera}{a}
	\newcommand{\teamdfagameplayerb}{b}
	\newcommand{\insansespace}{{\in}}
	\newcommand{\notinsansespace}{{\not\in}}
	 \newcommand\variablestep{\fontforfinitedomainvariables{stp}}
	\newcommand\variablegameover{\fontforfinitedomainvariables{lost}}
	\newcommand\descriptioncooleventoneline[2]{{\pre: \ensuremath{#1}; 
		\post: \ensuremath{#2}}}

	\newcommand\descriptioncoolevent[2]{{\pre: \ensuremath{#1}; 
		
		\post: \ensuremath{#2}}}
	\renewcommand{\playerexists}[1]{#1}
        	\newcommand{\bitone}{\beta}
	\newcommand{\bittwo}{\beta'}
	\newcommand{\bitanswerone}{m}
	\newcommand{\bitanswertwo}{m'}

        The proof is given by reduction from the problem
        \teamdfagame~\cite[Def.~1, p.~14:7]{DBLP:conf/fun/CoulombeL18}, shown to
        be undecidable.

        	 We consider a two-versus-one (players
        $\teamdfagameplayera$ and $\teamdfagameplayerb$ versus player
        $\forall$) team game played on a deterministic finite
        automaton (DFA) $\automaton$ whose alphabet is $\set{0, 1}$,
        whose set of states is $Q$, initial state is $q_0$, transition
        function $\delta$. Special subsets of states $F_\exists$ and
        $F_\forall$ are given. The game starts with $\automaton$ being
        in state $q_0$. Each round is divided in six steps:

        \begin{enumerate}

	\item if the current state $q$ is in $F_\exists$ then team $\set{\teamdfagameplayera, \teamdfagameplayerb}$ wins; if the current state $q$ is in $F_\forall$ then team $\set{\forall}$ wins;
	\item Player $\forall$ inputs two bits $\bitone, \bittwo$ into $\automaton$;
	\item Player $\teamdfagameplayera$ learns $\bitone$;
          \item  Player  $\teamdfagameplayera$ inputs one bit $\bitanswerone$ into $\automaton$;
	\item Player $\teamdfagameplayerb$ learns $\bittwo$;
        \item  Player $\teamdfagameplayerb$ inputs one bit $\bitanswertwo$ into $\automaton$.
\end{enumerate}
At each step, player $\forall$ has perfect information.
 TEAM DFA GAME is the decision problem: given an DFA $\automaton$, subsets of states $F_\exists$, $F_\forall$, does the team $\{a,b\}$ have a winning strategy?
 
The rest of the proof consists in representing the initial situation, the game rules and the goal of a \teamdfagame instance as a distributed strategy existence problem instance.

 \textbf{Definition of the reduction. } Let $(\automaton, F_\exists, F_\forall)$ be an instance of \teamdfagame.  Teams are $\agtsetexists = \set{\teamdfagameplayera, \teamdfagameplayerb}$ and $\agtsetforall = \set{\forall}$. We introduce a finite-domain variable $q$ that ranges over the set of states of $\automaton$.
 The variable $q$ can be represented by a finite set of atomic propositions: for example, for an automaton with 8 states from \set{0,\dots,7}, three atomic propositions, $bit_1(q)$, $bit_2(q)$ and $bit_3(q)$ so that say $(q=5)$ is the Boolean formula $bit_1(q) \land \neg bit_2(q) \land bit_3(q)$.
We prefer to keep with a finite-domain variable $q$ to avoid technicalities.
 
  We also introduce a finite-domain variable $\variablestep$ that ranges over $\set{1, 2, 3, 4, 5, 6}$. The Boolean variable $\variablegameover$ is true if the team $\agtsetexists$ has lost. We define $\kripkemodel, \world$ to be the single-world S5 epistemic model in which $\is\variableturn \forall$, $\is q {q_0}$, $\is\variablestep 1$, $\lnot \variablegameover$.
 The actions in $\seteventsfor{\forall}$ form an $\teamdfagameplayera$- and $\teamdfagameplayerb$-indistinguishably equivalence class and are of the form:
\newcommand{\serrerlesitems}{\setlength\itemsep{-1mm}}
 \begin{itemize}
 	\item  \descriptioncooleventoneline{\is\variableturn\forall \land \is\variablestep1 \land q \insansespace F_\forall}{\variablegameover \assignment \top, \variablestep \assignment 2}
	\item \descriptioncooleventoneline{\is\variableturn\forall \land \is\variablestep1 \land q \notinsansespace F_\forall}{\variablestep \assignment 2}

\newcommand\descriptionplayeralphainputstwobits[2]{ \descriptioncooleventoneline{\is\variableturn\forall {\land} \is\variablestep2}{\bitone \assignment #1,  \bittwo \assignment #2, q \assignment \delta(q, #1#2), \variableturn \assignment \teamdfagameplayera,  \variablestep \assignment {3}}
}
\item \descriptionplayeralphainputstwobits00
\item \descriptionplayeralphainputstwobits10
\item \descriptionplayeralphainputstwobits01
\item \descriptionplayeralphainputstwobits11

 \end{itemize}

\noindent
$\seteventsfor{\teamdfagameplayera}$ is a $\teamdfagameplayerb$-indistinguishably equivalence class and contains:
 \begin{itemize}
	\item \descriptioncooleventoneline{\is\variableturn{\teamdfagameplayera} \land \is\variablestep3 \land \bitone}{\variablestep := 4}
	\item \descriptioncooleventoneline{\is\variableturn{\teamdfagameplayera} \land \is\variablestep3 \land \lnot \bitone}{\variablestep := 4}
	\item \descriptioncooleventoneline{\is\variableturn{\teamdfagameplayera} \land \is\variablestep4}{\bitanswerone \assignment \bottom; \variablestep := 5; q := \delta(q, 0); \variableturn := \teamdfagameplayerb}
	\item \descriptioncooleventoneline{\is\variableturn{\teamdfagameplayera} \land \is\variablestep4}{\bitanswerone \assignment \top; \variablestep \assignment 5; q \assignment \delta(q, 1); \variableturn := \teamdfagameplayerb}
\end{itemize}

\noindent $\seteventsfor{\teamdfagameplayerb}$ is an $\teamdfagameplayera$-indistinguishably equivalence class and contains:
 \begin{itemize}
 	\item \descriptioncooleventoneline{\is\variableturn{\teamdfagameplayerb} \land \is\variablestep5 \land \bittwo}{\variablestep \assignment 6}
 	\item \descriptioncooleventoneline{\is\variableturn{\teamdfagameplayerb} \land \is\variablestep5 \land \lnot \bittwo}{\variablestep \assignment 6}
 	\item \descriptioncooleventoneline{\is\variableturn{\teamdfagameplayerb} \land \is\variablestep6}{\bitanswertwo \assignment \bottom; \variablestep \assignment 1; q \assignment \delta(q, 0); \variableturn \assignment \forall}
 	\item \descriptioncooleventoneline{\is\variableturn{\teamdfagameplayerb} \land \is\variablestep4}{\bitanswertwo \assignment \top; \variablestep \assignment 1; q \assignment \delta(q, 1); \variableturn \assignment \forall}
 \end{itemize}

 The assignments $q:=\delta(q,0)$ and $ q:=\delta(q,1)$ are shortcuts
 for some ``if statements'' on states, e.g. `if $q=5$, then $q :=
 2$'. For instance, assuming that we have eight states
 $\{s_0,\ldots,s_7\}$ which are thus representable with three bits,
 the assignment $q:=\delta(q,0)$ is simulated by the following set of
 propositional assignments: $\set{bit_i(q):=\psi_i}_{i=1..3}$,
 where $\psi_i$ is the Boolean formula
 $\bigor_{k \in {0..7} \text{ s.t. the i-th bit of $\delta(s_k,0)$ is
     1}} (q=s_k)$.

The goal formula $\formula$ is $\lnot \variablegameover \land \is\variablestep1 \land (q {\in} F_\exists)$.
\end{proof}

We now turn to decidable cases: games with imperfect information and
epistemic objectives are known to be decidable either when actions are
public \cite{DBLP:conf/atal/BelardinelliLMR17}, or when information is
hierarchical \cite{DBLP:conf/kr/MaubertM18}. We establish similar results
in our setting.

\subsection{The case of non-expanding action models}
\label{sec-non-expanding}

Theorems~\ref{theorem:epistemicgameproblem-publicannouncements-PSPACEcomplete}
and~\ref{theorem:epistemicgameproblem-publicactions-EXPTIMEcomplete}
of Section~\ref{section:controller} generalise to the distributed
strategy synthesis problem. First, we inherit the lower bounds  by
letting 
$\agtsetexists=\{\text{Controller}\}$ and
$\agtsetforall=\{\text{Environment}\}$, and by
making them alternate turns. Second, the upper bounds are obtained by
adapting the alternating algorithms for the upper bounds of Theorems
~\ref{theorem:epistemicgameproblem-publicannouncements-PSPACEcomplete}
and~\ref{theorem:epistemicgameproblem-publicactions-EXPTIMEcomplete}. We
need to ensure that existential choices of actions of an agent
$\agent \in \agtsetexists$ lead to a \emph{uniform} strategy. To do
that, every time agent $\agent$ picks an action $\event$, we perform an
extra universal choice over
$\epistemicrelationequiv$-indistinguishable worlds, and continue executing the algorithm from
these worlds. 

\begin{theorem}
	\label{theorem:uniformstrategyexistence-publicannouncements-PSPACEcomplete}
	For public announcements, the distributed strategy synthesis problem is PSPACE-complete.
\end{theorem}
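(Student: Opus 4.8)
The plan is to prove matching lower and upper bounds, in both cases recycling the machinery developed for controller synthesis in Theorem~\ref{theorem:epistemicgameproblem-publicannouncements-PSPACEcomplete}. The guiding observation, already flagged at the start of Section~\ref{sec-non-expanding}, is that controller synthesis is literally the degenerate instance of distributed strategy synthesis in which $\agtsetexists=\{\text{Controller}\}$ and $\agtsetforall=\{\text{Environment}\}$ alternate turns. Since a lone member of $\agtsetexists$ has perfect information about the (single) actual world, its uniform strategies are just ordinary strategies, so the whole apparatus of Theorem~\ref{theorem:epistemicgameproblem-publicannouncements-PSPACEcomplete} applies unchanged.

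For the lower bound I would therefore reuse the TQBF reduction of Theorem~\ref{theorem:epistemicgameproblem-publicannouncements-PSPACEcomplete} verbatim, reading Controller as a single existential player and Environment as a single universal player. The only extra bookkeeping is to check that the produced instance meets hypotheses (H1)--(H3): this is immediate because the initial model has a single world (so the starting player is trivially known and turns alternate deterministically, giving (H1)--(H2)) and the executability of each announcement is dictated purely by the round counter encoded in the $\varphi_{p_i},\varphi_{\neg p_i}$, so a player's available actions depend only on its observation (giving (H3)). Hence \PSPACE-hardness transfers.

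For the upper bound I would design an alternating algorithm running in polynomial time and invoke $\textsc{APtime}=\PSPACE$~\cite{chandra1976alternation}. It refines the procedure of Figure~\ref{figure:algoPSPACEpublicannouncements}: it keeps the current pointed epistemic model $\kripkemodel,\world$, reads the active player off the turn variable $\variableturn$ (well defined throughout by (H1)--(H2)), and at each round guesses an executable action \emph{existentially} if that player lies in $\agtsetexists$ and \emph{universally} if it lies in $\agtsetforall$, failing on any non-executable guess; by (H3) executability is constant over a player's information set, so this never prescribes an illegal move. Because applying a public announcement only deletes worlds and leaves the remaining valuations and relations intact, the number of rounds is bounded by $|\setworlds|$ and every intermediate model fits in polynomial space, so the updates $\kripkemodel\otimes\eventmodel$ and the model-checking of preconditions and of $\formula$ all stay polynomial. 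The resulting complexity therefore matches that of controller synthesis, as the table advertises.

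The crux, and the step I expect to be the main obstacle, is guaranteeing that the existential guesses assemble into genuinely \emph{uniform} strategies instead of ones that covertly exploit the algorithm's knowledge of the actual world. The mechanism is the one sketched in Section~\ref{sec-non-expanding}: immediately after a player $\agent\in\agtsetexists$ existentially commits to an action $\event$, the algorithm makes an \emph{extra universal} choice of a world $\world''$ with $\world\epistemicrelationequiv\world''$ in the current model and continues from $\kripkemodel\otimes\eventmodel,(\world'',\event)$. Fixing $\event$ \emph{before} opening this universal branch is precisely what forces the commitment to depend only on $\agent$'s observation, faithfully encoding ``there is a single action good for the entire $\epistemicrelationequiv$-class''. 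The delicate part is the correctness proof, i.e.\ exhibiting a bijection between accepting computations and winning uniform distributed strategies; this is where the defining feature of public announcements is decisive, namely that every action is observed by \emph{every} agent, so all players share one and the same action-sequence observation. This shared observation is exactly what keeps the alternation sound for a whole team rather than a single player, and what prevents the private, incomparable information flow that drives the undecidability of Theorem~\ref{theorem:uniformstrategyundecidable}.
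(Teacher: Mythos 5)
Your proposal follows the paper's proof essentially verbatim: the same alternating polynomial-time algorithm in which, after each existential player's choice of action, an extra \emph{universal} branching over $\epistemicrelationequiv$-indistinguishable worlds enforces uniformity, the same bound of $|\setworlds|$ on the number of rounds because public announcements only delete worlds, and the same inheritance of \PSPACE-hardness from Theorem~\ref{theorem:epistemicgameproblem-publicannouncements-PSPACEcomplete} by taking Controller and Environment as the two lone agents. One small slip that does not affect the argument: the initial model of the TQBF reduction has $4k+1$ worlds rather than one, so hypotheses (H1)--(H3) should be verified by giving the two new agents identity accessibility relations (making uniformity vacuous) and letting the announcements set $\variableturn$, rather than by appealing to a single-world initial model.
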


\begin{proof}
	The alternating algorithm that runs in polynomial time is
        similar to the one given in the proof
        of~Theorem~\ref{theorem:epistemicgameproblem-publicannouncements-PSPACEcomplete},
        except that we also add universal choices of $\epistemicrelationevents\agent$-successors for player $\agent$ in $\agtsetexists$. More precisely, the steps of the algorithm works as follows. When it is the turn of a player $\agent$ in $\agtsetforall$ to play, simply universally guess an executable action in $\seteventsfor\agent$. When it is the turn of a player $\agent$ in $\agtsetexists$ to play, then perform the following steps:
        \begin{itemize}
        	\item first existentially guess an executable action $\event$ in $\seteventsfor\agent$;
        	\item second, universally guess a $\epistemicrelation\agent$-successor of the pointed world in the current epistemic model and make it as the new pointed world;
        	\item compute the new epistemic model by executing the $\event$.
        \end{itemize} As for Theorem \ref{theorem:epistemicgameproblem-publicannouncements-PSPACEcomplete}, the length of such a sequence is bounded by the number of worlds in the initial epistemic model $\modelM$. 
    The obtained algorithm is given in Figure~\ref{figure:algodistrPSPACEpublicannouncements}.

    \begin{figure}[th]
    	\begin{algo}
    		\begin{algoblocprocedure}{DistrStratSynthesisPublicAnnouncements($\kripkemodel,
    				\world_\init$, $\eventmodel$,
    				$(\seteventsfor\agent)_{\agent \in \agtset}$, 
    				$\formula$)}

                              \vspace{1ex}
    			set $\kripkemodel,
    			\world_\init$ as the current pointed epistemic model;

    			\begin{algoblocfor}{$i := 0$ to the number of
    					worlds in $\kripkemodel$}
    				
    				\vspace{1ex}
    				\algoif the current pointed epistemic model satisfies $\formula$ \algothen \algoaccept;

    				let $\agent$ be the agent such that $\is\variableturn\agent$ is true in the current pointed epistemic model;

    				\algoif $\agent \in \agtsetexists$ \algothen
    				
    				\begin{algobloc}
                                  \vspace{1ex}
    					existentially choose $\event \in
    					\seteventsfor\agent$ that is executable in
    					the current pointed epistemic model
    					(fail if no such action exists);

    					universally choose an $\epistemicrelation\agent$-successor of the pointed world in the current epistemic model and make it as the new pointed world;
    				\end{algobloc}

                                \vspace{1ex}
    				\algoelse \algoif $\agent \in \agtsetforall$ \algothen
    	
                                \begin{algobloc}
                                  \vspace{1ex}
    				universally choose $\event \in
    				\seteventsfor\agent$ that is executable in
    				the current pointed epistemic model
    				(fail if no such action exists);
                              \end{algobloc}
                              
    				set $\kripkemodel \otimes \eventmodel, (\world, \event)$ as the current pointed epistemic model,  where $\kripkemodel, \world$ was the previous current pointed epistemic model;
    			\end{algoblocfor}

    			\algoreject
    			
    		\end{algoblocprocedure}
    	\end{algo}
    	\caption{Alternating algorithm for deciding in polynomial-time the distributed strategy synthesis problem when actions are public announcements.  	\label{figure:algodistrPSPACEpublicannouncements}}
    \end{figure}

     Hardness follows from Theorem \ref{theorem:epistemicgameproblem-publicannouncements-PSPACEcomplete} (simply consider the controller as the single agent in $\agtsetexists$, the environment as the single agent in $\agtsetforall$, and make them alternate).
\end{proof}

\begin{theorem}
	\label{theorem:uniformstrategyexistence-publicactions-EXPTIMEcomplete}
For  public actions, the distributed strategy synthesis problem  is EXPTIME-complete.
\end{theorem}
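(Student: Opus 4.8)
The plan is to reuse the two-sided structure of Theorem~\ref{theorem:uniformstrategyexistence-publicannouncements-PSPACEcomplete}, but now on top of the polynomial-\emph{space} alternating algorithm of Theorem~\ref{theorem:epistemicgameproblem-publicactions-EXPTIMEcomplete} instead of its polynomial-\emph{time} counterpart, since public actions may alter valuations and hence action sequences of linear length no longer suffice.

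For \EXPTIME-membership, I would start from the procedure of Figure~\ref{figure:algoEXPTIMEpublicactions}, which keeps a single current pointed epistemic model (storable in space linear in $|\kripkemodel|+|\eventmodel|$, as public actions never add worlds) and, once equipped with the folklore termination counter, runs in polynomial space. I would then graft onto it the uniformity gadget of Figure~\ref{figure:algodistrPSPACEpublicannouncements}: at a turn of a player $\agent\in\agtsetexists$ the algorithm first \emph{existentially} guesses an executable action $\event\in\seteventsfor\agent$ and then makes an \emph{extra universal} choice of an $\epistemicrelationequiv$-indistinguishable world from which to resume, whereas a turn of $\agent\in\agtsetforall$ stays a plain universal guess. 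This leaves the space usage polynomial, so the whole computation is in alternating polynomial space, and membership in \EXPTIME\ follows because alternating polynomial space coincides with \EXPTIME.

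The matching lower bound I would inherit directly from Theorem~\ref{theorem:epistemicgameproblem-publicactions-EXPTIMEcomplete} by setting $\agtsetexists=\{\text{Controller}\}$, $\agtsetforall=\{\text{Environment}\}$ and alternating their turns. In that reduction the initial model has a single world and all actions are purely Boolean and fully observable; the unique existential player therefore has perfect information, every strategy is automatically uniform, and hypotheses~(H1)--(H3) hold trivially, so the \EXPTIME-hardness carries over unchanged.

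The hard part will be proving that the uniformity gadget is sound and complete for a whole \emph{team} of existential players whose information may be incomparable --- exactly the feature responsible for undecidability in the general propositional case (Theorem~\ref{theorem:uniformstrategyundecidable}). I would need to show that an accepting alternating run can be compiled into a distributed strategy in which each $\agent\in\agtsetexists$ prescribes the same action on $\epistemicrelationequiv$-indistinguishable histories, and conversely. The decisive leverage is that public actions are observed identically by everyone, so no private information is ever created during play: each player's uncertainty is pinned down by its static initial $\epistemicrelationequiv$-class together with the commonly known action history, which bounds the asymmetry between teammates and is what lets the per-player universal branching faithfully explore the information sets. Turning this intuition into a rigorous equivalence --- in particular reconciling the single action that the product applies to the current model with the several $\epistemicrelationequiv$-classes a player may be facing --- is where the genuine difficulty lies, and is precisely what separates this decidable case from the undecidable general one.
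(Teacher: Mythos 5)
Your proposal matches the paper's proof: membership is obtained exactly as you describe, by taking the polynomial-space alternating algorithm for controller synthesis with public actions and inserting, at each turn of a player in $\agtsetexists$, an existential guess of an executable action followed by a universal choice of an indistinguishable world (Figure~\ref{figure:algodistrEXPTIMEpublicactions}), and hardness is inherited from Theorem~\ref{theorem:epistemicgameproblem-publicactions-EXPTIMEcomplete} with Controller and Environment as the two singleton teams. The correctness of the uniformity gadget that you flag as the delicate point is not argued in any more detail in the paper than in your sketch, so you are not missing anything the paper supplies.
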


\islongversion{
\begin{proof}
  The alternating algorithm that runs in polynomial space is similar
  to the one given in the proof of Theorem~\ref{theorem:epistemicgameproblem-publicannouncements-PSPACEcomplete}.
  The algorithm is in polynomial space for the same reason said in the proof of~Theorem~\ref{theorem:epistemicgameproblem-publicactions-EXPTIMEcomplete}. The obtained algorithm is given in Figure~\ref{figure:algodistrEXPTIMEpublicactions}.

  \begin{figure}[ht]
  	\begin{algo}
  		\begin{algoblocprocedure}{DistrStratSynthesisPublicActions($\kripkemodel,
  				\world_\init$, $\eventmodel$,
  				$(\seteventsfor\agent)_{\agent \in \agtset}$, 
  				$\formula$)}
  			
\vspace{1ex}  			
  			set $\kripkemodel,
  			\world_\init$ as the current pointed epistemic model;

  			\begin{algoblocwhile}{the current pointed epistemic model does not satisfy $\formula$}
  				
  				\vspace{1ex}

  				let $\agent$ be the agent such that $\is\variableturn\agent$ is true in the current pointed epistemic model;

  				\algoif $\agent \in \agtsetexists$ \algothen
  				
  				\begin{algobloc}
  					existentially choose $\event \in
  					\seteventsfor\agent$ that is executable in
  					the current pointed epistemic model
  					(fail if no such action exists);

  					universally choose a $\epistemicrelation\agent$-successor of the pointed world in the current epistemic model and make it as the new pointed world;
  				\end{algobloc}
  				
  				\algoelse \algoif $\agent \in \agtsetforall$ \algothen
  				
  				\begin{algobloc}
  					
  				universally choose $\event \in
  				\seteventsfor\agent$ that is executable in
  				the current pointed epistemic model
  				(fail if no such action exists);
  					\end{algobloc}
  				set $\kripkemodel \otimes \eventmodel, (\world, \event)$ as the current pointed epistemic model,  where $\kripkemodel, \world$ was the previous current pointed epistemic model;
  			\end{algoblocwhile}

  			\algoaccept
  			
  		\end{algoblocprocedure}
  	\end{algo}
  	\caption{Alternating algorithm for deciding in polynomial-space the distributed strategy synthesis problem when actions are public actions.  	\label{figure:algodistrEXPTIMEpublicactions}}
  \end{figure}
  Hardness follows from Theorem~  \ref{theorem:epistemicgameproblem-publicactions-EXPTIMEcomplete}.
\end{proof}}

We now turn to a decidable case for propositional actions.

\subsection{Propositional actions+hierarchical information}
\label{section:singleplayerpropositionalevents}

We consider propositional action models, which may make the size of  epistemic
models  grow unboundedly, but where the information of the different players is
hierarchical, making it easier to synchronise the existential players' strategies.

According to Theorem~\ref{theorem:uniformstrategyundecidable}, the
distributed strategy synthesis problem is undecidable for propositional
actions and a two-player team $\agtsetexists=\{a,b\}$ against team $\agtsetforall=\{\forall\}$. Observe that in the proof
of Theorem~\ref{theorem:uniformstrategyundecidable}, the information of
players $a$ and $b$ is incomparable: in each round $a$ only
learns the first bit produced by $\forall$'s move, while $b$ only
learns the second bit. This cannot be simulated in games with so-called
\emph{hierarchical information}, a classic restriction to regain
decidability in multi-player games of imperfect information~\cite{peterson2002decision,PR90}.

\newcommand{\subseteqwithsomespace}{~\!\!\!\subseteq~\!\!\!}

We say that an input  of the
distributed strategy synthesis problem $((\epsmodel,\world_\init),\eventmodel,\phi)$ presents 
\emph{hierarchical information} if  the set of
$\agtsetexists$ can be totally ordered ($\agent_1 < \ldots <\agent_n$) so that  $\epistemicrelationequiv[\agent_i]\subseteqwithsomespace
\epistemicrelationequiv[\agent_{i+1}]$ and $\epistemicrelationeventsequiv[\agent_i]\subseteqwithsomespace
\epistemicrelationeventsequiv[\agent_{i+1}]$, for each $1\le i<n$.

\begin{theorem}
	\label{theorem:uniformstrategydecidable}
	Distributed strategy synthesis with propositional actions and hierarchical
        information is decidable. 
\end{theorem}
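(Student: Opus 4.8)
The plan is to reuse the two-step recipe behind Theorem~\ref{theorem:epistemicgameproblem-prepostBoolean-decidable}: represent the infinite generated structure by a finite game arena, then reduce to a decidable class of games. First I would generalise the arena of Proposition~\ref{prop-regular} from the two-role (Controller/Environment) partition to the full set of players. Because the action model is propositional, a world $(\world,\event_1,\ldots,\event_n)$ of $\ETLforest{\epsmodel}{\eventmodel}$ is determined, up to its epistemic neighbourhood, by its current valuation and the last action performed; the arena states can therefore again be taken to be the initial worlds together with the pairs $(\event,\val)$ with $\val\in 2^{\atmsetf}$, an arena of size exponential only in the number of atomic propositions. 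Positions are partitioned by whose turn it is, which by~(H1)--(H2) is read off unambiguously from $\variableturn$ inside $\val$, and each player $\agent$ carries the indistinguishability relation induced by $\epistemicrelationeventsequiv[\agent]$. Exactly as in Proposition~\ref{prop-regular} and Lemma~\ref{lem-sem-DEL-ETL}, the unfolding of this arena under synchronous perfect recall---histories compared component by component---is isomorphic to $\ETLforest{\epsmodel}{\eventmodel}$, so uniform distributed strategies and the reachability condition for the epistemic objective $\formula$ transfer faithfully.

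Next I would check that hierarchy is preserved by this construction. Two histories are $\epistemicrelationequiv[\agent]$-related precisely when they have equal length and are related component by component, by $\epistemicrelationequiv[\agent]$ on worlds and $\epistemicrelationeventsequiv[\agent]$ on events. Since the input presents hierarchical information, $\epistemicrelationequiv[\agent_i]\subseteq\epistemicrelationequiv[\agent_{i+1}]$ and $\epistemicrelationeventsequiv[\agent_i]\subseteq\epistemicrelationeventsequiv[\agent_{i+1}]$ for each $i$, and these inclusions survive the componentwise product defining indistinguishability of histories. Hence the finite arena and its history tree still present hierarchical information for the team $\agtsetexists$, the players being totally ordered so that each observes at least as much as the next; Hypothesis~(H3) further ensures that the moves available to a player agree across indistinguishable histories, so observation-based strategies are well defined.

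It then suffices to invoke the decidability of finite multiplayer games with imperfect information, hierarchical information and epistemic objectives established in~\cite{DBLP:conf/kr/MaubertM18}. The arena above, equipped with the hierarchically ordered relations for $\agtsetexists$ and with the epistemic reachability objective $\formula$, is an instance of such a game, for which the existence of a winning distributed strategy for $\agtsetexists$ is decidable; transporting the answer back through the isomorphism of the first step decides the original problem.

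The main obstacle is to make the reduction match the precise format of the cited result. One must verify that componentwise synchronous-perfect-recall indistinguishability is the notion of observation used there, and that the hierarchy it requires is supplied by the assumption together with~(H1)--(H3). The universal team is, pleasantly, unproblematic: our winning condition imposes no uniformity on $\agtsetforall$---every executable universal move produces a legitimate outcome---so the universal players act as a single perfectly informed adversary. This places $\agtsetforall$ at the top of the information order, extending $\agent_1<\ldots<\agent_n$ to a total hierarchy over all players, as wanted. Finally, reaching a state that satisfies an $\languageEL$ formula on the epistemically labelled arena is a plain reachability condition, routinely expressible in the objective language of~\cite{DBLP:conf/kr/MaubertM18}.
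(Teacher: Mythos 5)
Your proposal follows essentially the same route as the paper: reduce to a finite multi-player epistemic game arena whose states are the initial worlds together with pairs $(\event,\val)$ (Proposition~\ref{prop-regular-imp}, mirroring Proposition~\ref{prop-regular}), observe that the history structure is isomorphic to $\ETLforest{\epsmodel}{\eventmodel}$, and invoke the decidability of hierarchical multi-player epistemic games from~\cite{DBLP:conf/kr/MaubertM18} (Theorem~\ref{theo-unif-strat-KR}). Your explicit check that the hierarchy inclusions survive the componentwise extension to histories, and your remark on the universal team, are details the paper leaves implicit, but the argument is the same.
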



We end the section by sketching the proof of
Theorem~\ref{theorem:uniformstrategydecidable}. We
start by introducing a multi-player variant of the epistemic game arenas from
Definition~\ref{def-etl-model}. 

\begin{definition}
  \label{def-multi-etl-model}
  A \emph{multi-player epistemic  game arena} 
  $\etlmodel=(\setworlds,\world_\init,\trans,(\epistemicrelationequiv)_{\agent
    \in \agtset},\valworlds)$ is a structure such that
  \begin{itemize}
  \item   $(\setworlds,(\epistemicrelationequiv)_{\agent
    \in \agtset},\valworlds)$ is an epistemic model, 
\item   $\setworlds=\uplus_{\agent\in\agtset}\setworlds_\agent$ is
  partitioned into  positions of each agent,
\item  $\world_\init$ is
  an \emph{initial world} and
\item  $\trans\subseteq\setworlds\times\setworlds$ is a
  \emph{transition relation}.
  \end{itemize}
\end{definition}

Accessibility relations $\epistemicrelationequiv$ are extended to
histories, strategies of agent $a$ are required to be uniform with
respect to $\epistemicrelationequiv$, and the notions of outcomes,
distributed strategies and winning distributed strategies are defined as before.

Theorem~\ref{theorem:uniformstrategydecidable} is established by
reducing the distributed strategy synthesis problem to a similar
problem in multi-player epistemic games, known to be decidable:

\begin{theorem}[\cite{DBLP:conf/mfcs/Puchala10,DBLP:conf/kr/MaubertM18}]
  \label{theo-unif-strat-KR}
Existence of winning distributed strategies in  multi-player epistemic
  games 
  with hierarchical information and epistemic temporal objectives 
  is
  decidable.
\end{theorem}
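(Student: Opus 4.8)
The plan is to lift the reduction of Proposition~\ref{prop-regular} from the two-player Controller/Environment setting to an arbitrary team partition, and then to invoke Theorem~\ref{theo-unif-strat-KR}. Given an input $((\epsmodel,\world_\init),\eventmodel,\formula)$ with $\eventmodel$ propositional and presenting hierarchical information, I would first construct a finite multi-player epistemic game arena $\etlmodel$ in the sense of Definition~\ref{def-multi-etl-model}. Its worlds are the worlds of $\epsmodel$ together with all pairs $(\event,\val)$, where $\event\in\setevents$ records the last action performed and $\val\in 2^{\atmsetf}$ is the current valuation over the finite set $\atmsetf$ of atoms occurring in the instance; since actions are propositional, there are only finitely many such pairs. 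Transitions, epistemic relations and valuations are defined exactly as in Proposition~\ref{prop-regular}: in particular $(\event,\val)\epistemicrelationequiv(\event',\val')$ iff $\event\epistemicrelationeventsequiv\event'$, and initial worlds are related as in $\epsmodel$. The partition $\setworlds=\uplus_{\agent\in\agtset}\setworlds_\agent$ assigns each world to the class $\setworlds_\agent$ of the unique agent $\agent$ for which $\is\variableturn\agent$ holds. Hypotheses (H1) and (H2), together with the requirement that the next value of $\variableturn$ depends on the action only, make this partition well-defined and guarantee that $\epistemicrelationequiv$-indistinguishable worlds lie in the same class, so that $\etlmodel$ is a genuine turn-based arena of imperfect information.

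Next I would prove that the structure of histories of $\etlmodel$, with $\epistemicrelationequiv$ extended pointwise to equal-length histories, is isomorphic to $\ETLforest{\epsmodel}{\eventmodel}$, reproducing the argument of Proposition~\ref{prop-regular}. Because postconditions are deterministic functions of the valuation, the valuations along any history are fixed by the sequence of actions, and the history relation $w_0(\event_1,\val_1)\ldots\epistemicrelationequiv w_0'(\event_1',\val_1')\ldots$ of $\etlmodel$ reduces to ``$w_0\epistemicrelationequiv w_0'$ and $\event_i\epistemicrelationeventsequiv\event_i'$ for every $i$'', which is exactly the DEL product relation on histories. Combined with Lemma~\ref{lem-sem-DEL-ETL}, this gives that $\play_{\le i}\models\formula$ agrees on the two structures and that uniform distributed strategies match on both sides; here Hypothesis (H3) is precisely what guarantees that an $\epistemicrelationequiv$-invariant action is executable at every indistinguishable history, so that uniform strategies are well-defined. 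Consequently there is a winning distributed strategy for $\agtsetexists$ in the DEL game iff there is one for the same team in $\etlmodel$ for the reachability objective $\loginthefuture\formula$.

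The decisive step is to verify that hierarchical information survives the construction. Assume $\agtsetexists$ is ordered $\agent_1<\ldots<\agent_n$ with $\epistemicrelationequiv[\agent_i]\subseteq\epistemicrelationequiv[\agent_{i+1}]$ and $\epistemicrelationeventsequiv[\agent_i]\subseteq\epistemicrelationeventsequiv[\agent_{i+1}]$ for each $i$. On initial worlds the arena relation coincides with that of $\epsmodel$, and on worlds $(\event,\val)$ it relates $(\event,\val)$ and $(\event',\val')$ iff $\event\epistemicrelationeventsequiv[\agent_i]\event'$; in both cases the inclusion $\epistemicrelationequiv[\agent_i]\subseteq\epistemicrelationequiv[\agent_{i+1}]$ on $\etlmodel$ is inherited directly from the corresponding inclusion in the input. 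As these inclusions are preserved when relations are extended pointwise to histories, $\etlmodel$ presents hierarchical information among $\agtsetexists$. Since a reachability objective $\loginthefuture\formula$ is a special case of an epistemic temporal objective, Theorem~\ref{theo-unif-strat-KR} applies to $\etlmodel$ and decides the existence of a winning distributed strategy, which settles decidability.

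I expect the main obstacle to be not the hierarchy transfer, which is immediate, but the faithful matching of uniform strategies between the two settings: one must show that the pointwise history relation of $\etlmodel$ coincides with DEL indistinguishability on histories, and that (H3) excludes the degenerate case in which a uniform choice would prescribe an action unavailable at some indistinguishable history. Once this correspondence is established, the finiteness and hierarchy of $\etlmodel$ follow from the Proposition~\ref{prop-regular} pattern, and decidability is delivered by Theorem~\ref{theo-unif-strat-KR}.
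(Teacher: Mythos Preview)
You have misidentified what is being asked. Theorem~\ref{theo-unif-strat-KR} is not a result proved in this paper: it is imported from \cite{DBLP:conf/mfcs/Puchala10,DBLP:conf/kr/MaubertM18} and concerns the finite multi-player epistemic game arenas of Definition~\ref{def-multi-etl-model}, not DEL games. The paper offers no proof of it; it is stated precisely so that it can be \emph{invoked} to establish Theorem~\ref{theorem:uniformstrategydecidable} via Proposition~\ref{prop-regular-imp}.

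Your proposal does not prove Theorem~\ref{theo-unif-strat-KR}; it proves Theorem~\ref{theorem:uniformstrategydecidable}. Indeed your last step is ``Theorem~\ref{theo-unif-strat-KR} applies to $\etlmodel$ and decides the existence of a winning distributed strategy'', which assumes the very statement you were asked to establish. What you have written is, in substance, exactly the paper's proof of Theorem~\ref{theorem:uniformstrategydecidable}: build the finite arena $\etlmodel$ from $\epsmodel$ and $\eventmodel$ as in Proposition~\ref{prop-regular-imp}, check that histories and indistinguishability match $\ETLforest{\epsmodel}{\eventmodel}$, observe that hierarchical information is preserved, and then appeal to Theorem~\ref{theo-unif-strat-KR}. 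As a proof of Theorem~\ref{theorem:uniformstrategydecidable} your outline is correct and follows the paper's route closely; as a proof of Theorem~\ref{theo-unif-strat-KR} it is circular.
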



The reduction is very similar to the one in the proof of Proposition~\ref{prop-regular}.
The main difference is that we use variable $\variableturn$ instead of bit $i\in\{0,1\}$ to define the
positions of the different agents.  The imperfect information of
players is defined based on the accessibility relations in
$\kripkemodel$ and $\eventmodel$.
More precisely, we can show that:

\begin{proposition} 
  \label{prop-regular-imp}
  Given an instance $((\epsmodel,\world_\init),\eventmodel,\phi)$ of
  the distributed strategy synthesis problem 
  where $\eventmodel$ is propositional, one can construct a multi-player epistemic  game arena $\etlmodel$
  such that
the distributed strategy synthesis problem 
 is equivalent to the existence of a winning distributed strategy for
 $\agtsetexists$ to enforce $\phi$
in $\etlmodel$ and 
$|\etlmodel|\le |\epsmodel|+|\eventmodel|\times 2^{m}$, where $m$ is
the number of atomic propositions involved.
\end{proposition}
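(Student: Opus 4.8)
The plan is to replay the construction of Proposition~\ref{prop-regular}, the only essential change being that the single bit $i\in\{0,1\}$ used there to record whose turn it is among the two external entities is now replaced by the finite-domain variable $\variableturn$, which ranges over $\agtset$ and is already part of the valuations. Concretely, I would let the worlds of $\etlmodel$ be either initial worlds $\world\in\epsmodel$ or pairs $(\event,\val)$ with $\event\in\setevents$ the last action executed and $\val\in 2^{\atmsetf}$ the current valuation. Since the turn is encoded inside $\val$, there is no need to carry an extra tag, and this is exactly why the bound improves from $|\eventmodel|\times 2^{m+1}$ in Proposition~\ref{prop-regular} to $|\eventmodel|\times 2^{m}$ here.

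First I would fix the partition $\setworlds=\uplus_{\agent\in\agtset}\setworlds_\agent$ by placing an arena world in $\setworlds_\agent$ precisely when its valuation satisfies $\is\variableturn\agent$: hypothesis~(H1) guarantees that all initial worlds agree on this value, so they lie in a single block, and the requirement that postconditions determine $\variableturn$ from the action alone makes the turn of every $(\event,\val)$ well-defined. Transitions are then defined exactly as in Proposition~\ref{prop-regular}: from $\world$ the agent $\agent$ whose turn it is may play any $\event\in\seteventsfor\agent$ with $\world\models\pre(\event)$, moving to $(\event,\postval(\world,\event))$; from $(\event,\val)$ the current agent plays any $\event'\in\seteventsfor\agent$ with $\val\models\pre(\event')$ (legitimate since $\eventmodel$ is propositional, so preconditions can be evaluated on valuations), moving to $(\event',\postval(\val,\event'))$. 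The accessibility relations are inherited verbatim: $\world\epistemicrelationequiv\world'$ from $\epsmodel$, and $(\event,\val)\epistemicrelationequiv(\event',\val')$ iff $\event\epistemicrelationeventsequiv\event'$. The valuations $\val,\val'$ need not coincide in an indistinguishable pair, which is correct because $\epistemicrelationequiv$-related worlds of $\epsmodel$ may already differ propositionally.

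The heart of the argument, as in Proposition~\ref{prop-regular}, is that the set of histories $\Hist{\etlmodel}$ issued from $\world_\init$, with the relations $\epistemicrelationequiv$ extended to histories, is isomorphic to $\ETLforest{\epsmodel}{\eventmodel}$: a history $\world\,(\event_1,\val_1)\cdots(\event_n,\val_n)$ maps to $\world\event_1\cdots\event_n$, and the component-wise extended relation on arena histories then demands exactly $\world\epistemicrelationequiv\world'$ together with $\event_i\epistemicrelationeventsequiv\event'_i$ for all $i$, which is the accessibility relation of the generated structure. By Lemma~\ref{lem-sem-DEL-ETL} corresponding histories satisfy the same formulas, so $\phi$-reaching outcomes match on both sides, and histories ending in $\setworlds_\agent$ correspond precisely to histories where it is $\agent$'s turn.

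I expect the main obstacle to be checking that the uniformity constraint is well-posed and faithfully preserved, and this is precisely where (H2) and (H3) intervene: (H2) ensures that two $\epistemicrelationequiv$-indistinguishable histories end in positions of the same player (their last actions assign the same value to $\variableturn$), so that the requirement $\strat(\hist)=\strat(\hist')$ for $\hist\epistemicrelationequiv\hist'$ is meaningful, while (H3) ensures that an action chosen by a uniform strategy at $\hist$ is also executable at every indistinguishable $\hist'$, so that copying the move yields a legal strategy. Granting these, uniform winning distributed strategies for $\agtsetexists$ transfer in both directions, the two problems are equivalent, and the announced bound is immediate from $|\setworlds'|\le|\setworlds|+|\setevents|\times 2^{m}$. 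Theorem~\ref{theorem:uniformstrategydecidable} then follows by combining this proposition with Theorem~\ref{theo-unif-strat-KR}, after noting that hierarchical information on the input carries over to $\etlmodel$ since its relations $\epistemicrelationequiv$ are built directly from those of $\epsmodel$ and $\eventmodel$.
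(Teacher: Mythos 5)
Your construction coincides with the paper's proof of Proposition~\ref{prop-regular-imp}: the same arena with worlds $\setworlds\cup(\setevents\times 2^{\atmsetf})$, the same turn-based partition via $\variableturn$, the same transition and accessibility relations, and the same isomorphism with $\ETLforest{\epsmodel}{\eventmodel}$ yielding the equivalence and the size bound. Your additional remarks on how (H1)--(H3) make the partition and the uniformity transfer well-posed are correct and in fact spell out details the paper leaves implicit.
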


\begin{proof}
  Let $\epsmodel=(\setworlds,(\epistemicrelationequiv)_{\agent
    \in \agtset},\valworlds)$ be an S5 epistemic
  model, 
  $\eventmodel=(\setevents,\{\epistemicrelationeventsequiv\}_{\agent\in\agents},\pre,\post)$
  an S5 propositional
  event model with
  $\setevents=\uplus_{\agent\in\agtset}\setevents_\agent$, and
  $\world_\init\in\setworlds$ an initial world.
Let $\atmsetf$ be the atomic propositions involved.

We define the multiplayer epistemic game arena $\etlmodel=(\setworlds',\world_\init,\trans,(\epistemicrelationequiv')_{\agent
    \in \agtset},\valworlds')$ as follows. First,
let
 $\setworlds'=\setworlds\union
    \setevents\times 2^\AP$, and for each $\agent\in\agtset$
let    $\setworlds'_\agent=\set{\world \mid \world\models
      \is\variableturn\agent}\cup \set{(\event,\val)\mid \val \models \is\variableturn\agent}$.
Next, for transitions, we let $(\world,(\event,\val))\in \trans$ if     
 $\world \models \pre(\event)$ 
 and $\val=\postval(\world,\event)$, and 
$((\event,\val),(\event',\val'))\in\trans$ if 
$\val \models \pre(\event')$ and
$\val'=\postval(\val,\event)$.
Next, for each $\agent\in\agents$, we let
 $\epistemicrelationequiv'\;=\;\epistemicrelationequiv{}\union
  \{((\event,\val),(\event',\val'))\mid
  (\event,\event')\in\releventsi\}$, and finally
 $\valworlds'(\world)=\valworlds(\world)$ and $\valworlds'(\event,\val)=\val$.

One can see that the structure given by the set of histories
$\Hist{\etlmodel}$ and relations $\epistemicrelationequiv'$
extended to these histories is isomorphic to
$\ETLforest{\epsmodel}{\eventmodel}$, and histories that satisfy
$\is\variableturn\agent$ in $\ETLforest{\epsmodel}{\eventmodel}$ correspond to histories
that also satisfy
$\is\variableturn\agent$ in $\etlmodel$
(provided they start in $\world_\init$), and we are done.
\end{proof} 


%
%
%
%
%

\section{Perspectives}
\label{section:perspectives}

%
We have incrementally extended the framework of epistemic planning to a game setting
where players’ actions are described by action models from
DEL. We have established
fine-grained results depending on the type of action models.

Works on classical planning that consider game features
exist, and they can easily be located in the landscape of decision problems we
have considered. Typically, our controller (resp. distributed strategy) synthesis problem  subsumes
\emph{conditional planning} with full (resp. partial) observability \cite{DBLP:conf/aips/Rintanen04a}. 
  Also, \emph{conformant planning} (partial
information where the plan is a sequence of actions) corresponds to a
particular case of our distributed strategy synthesis problem where
$\agtsetexists = \set{\exists}$, $\agtsetforall = \set{\forall}$ are
singletons, and $\exists$ is blind, \ie all actions in
$\seteventsfor{\forall}$ are indistinguishable for her. Blindness and
uniformity assumption  make that the strategies of $\exists$ can be seen as
sequential plans.

Moreover, the decision problems we have considered go well beyond classical
planning by addressing, \eg distributed planning with cooperative
or/and adversarial features. 
We are thus confronted in a DEL setting to issues usually met in game
theory, 
as witnessed by the undecidability of the distributed strategy synthesis problem
 for rather simple
action models (Theorem~\ref{theorem:uniformstrategyundecidable}). 
However, the DEL
perspective we propose offers a language to specify actions
(preconditions, postconditions, and epistemic relations between
actions) that may help identifying yet unknown decidable
cases. 

We note that our results should transfer to  Game Description Language,
  equivalent to DEL
\cite{DBLP:conf/ijcai/EngesserMNT18}.  

One interesting extension of the unifying setting of DEL games
would be to consider  concurrent
games, where players execute actions concurrently, but this will require
to first generalise the product operation of DEL. Another direction would be
to consider richer objectives such as ones expressed in epistemic
temporal logic, which can express not only  reachability 
but also safety or liveness objectives for instance.



Our approach contributes to putting closer the
field of  multi-agent planning in AI with the  field of multi-player
games in formal methods.
The  setting of DEL games may be beneficial to both, allowing the
transfer of powerful automata and game techniques from formal
methods to epistemic planning, and bringing in multiplayer games new insights from the  fine modelling offered by DEL.



\bibliography{biblio}

\end{document}
